\documentclass[aps,twocolumn,superscriptaddress,nofootinbib,a4paper,longbibliography]{revtex4-2}
\usepackage{amsmath}
\usepackage{amsfonts}
\usepackage{amsthm}
\usepackage{amssymb}
\usepackage{graphicx}
\usepackage{enumerate}
\usepackage{color}
\usepackage[T1]{fontenc}
\usepackage{braket}
\usepackage{todonotes}
\usepackage{soul}
\usepackage{mathrsfs}
\usepackage{float}
\usepackage{bm}
\usepackage{multirow}
\usepackage{tabularx}
\usepackage{outlines}

\graphicspath{{figures/}}

\newtheorem{theorem}{Theorem}
\newtheorem{lemma}{Lemma}

\newtheorem*{theorem*}{Theorem}
\newtheorem{definition}{Definition}

\newtheorem{assumption}{Assumption}

\def\>{\rangle}
\def\<{\langle}

\usepackage[bookmarks=false,colorlinks,citecolor=blue,
linkcolor=blue,anchorcolor=blue,urlcolor=blue
]{hyperref}

\newcommand{\R}{\mathbb{R}}
\newcommand{\C}{\mathbb{C}}
\newcommand{\QA}{\mathcal{A}}
\newcommand{\QY}{\mathcal{A}_{Y}}
\newcommand{\Qset}{\mathcal{Q}_{2222}}

\newcommand{\Qsharp}{\mathcal{Q}_{2\mathrm{Q}}^{\mathrm{sharp}}}

\begin{document}
\title{No-Disturbance-without-Uncertainty generates the quantum set in the simplest Bell scenario}
\author{Ravishankar Ramanathan}
\email{ravi@cs.hku.hk}
\affiliation{School of Computing and Data Science, The University of Hong Kong, Pokfulam Road, Hong Kong}

\begin{abstract}
Characterizing the set of quantum nonlocal correlations through information-theoretic constraints is a central problem in quantum foundations.
We show that the convex hull of the no-signalling behaviors in a bipartite Bell experiment satisfying a common-overlap no-disturbance-without-uncertainty condition is exactly the finite-dimensional quantum set whenever one party has two binary measurements, while the other party may have arbitrary finite numbers of settings and outcomes. In the simplest binary input-output Bell scenario, this characterizes the full eight-dimensional probability set, including the local marginals. The proof constructs a quantum realization from the conditional probabilities and no-signalling. 
The characterization also reduces optimization of any linear Bell expression in these scenarios to a scalar optimization over a one-parameter family of second-order cone programs.
These results highlight the connection between local uncertainty-disturbance constraints and quantum nonlocality, 
mediated by no-signalling and shared classical randomness.  

\end{abstract}

\maketitle

\textit{Introduction.---}
The outcomes of local measurements on entangled quantum systems can display correlations that no local hidden-variable model can reproduce~\cite{Bell1964,CHSH1969}.
This phenomenon of Bell nonlocality is a fundamental feature of quantum theory, showcasing its departure from classical theories~\cite{PR94,Tsirelson1987,Popescu14}.
Besides its foundational interest, Bell nonlocality also underlies device-independent information processing.
Quantum key distribution, randomness generation, and randomness amplification can be certified from observed input--output statistics without a detailed characterization of the devices, under the causal and input assumptions of the protocol~\cite{Acin2007,Pironio2010,ColbeckRenner2012,VaziraniVidick2014,Brandao2016,Ramanathan2025bound}.
Characterizing the set of quantum nonlocal behaviors is therefore important both for the limits of nonlocality and for the capabilities of those protocols.

Information-theoretic principles have been sought that constrain the behaviors in Bell experiments directly, in a device-independent manner.
This program began with the observation that the no-signalling principle alone allows stronger correlations than quantum theory~\cite{PR94,Tsirelson1987}.
It was found that access to the so-called Popescu--Rohrlich boxes would collapse communication complexity~\cite{vanDam2013}.
A sequence of principles was then proposed to cut the no-signalling set from the outside: Information Causality~\cite{Pawlowski2009}, Macroscopic Locality~\cite{NavascuesWunderlich2010}, Local Orthogonality~\cite{Fritz2013} and others \cite{Navascues2015}; these principles produce constraints based on information-theoretic tasks.
A closer conceptual setting for the present work is a partly information-theoretic principle based on a combination of fine-grained uncertainty relations with steering~\cite{OppenheimWehner2010}, i.e., local uncertainty restricts the conditional statistics of one player, while steering determines which ensembles are compatible with a common bipartite state.
In the simplest two-input binary-output Bell scenario, the four-dimensional quantum correlation set (the set of correlators excluding the marginals) admits an elegant characterization due to Tsirelson-Landau-Masanes (TLM) ~\cite{Tsirelson1980,Landau1988,LandauEmpirical1988,Masanes2003}. Quantum Bell inequalities and the geometry of the quantum set have since been studied intensively ~\cite{RTHH16,  Goh2018, Liu2024, BarizienBancal2024}. The state-of-the-art result in the information-theoretic reconstruction of the quantum boundary is that a generalized notion of Information Causality recovers the exact TLM correlator boundary for the four-dimensional set and also yields constraints involving local marginals~\cite{GachechiladzeMiklin2026}.
On the other hand, the full eight-dimensional quantum set, including the local marginals, is known through its extremal points~\cite{BarizienBancal2025}, while thus far no principle (partly or wholly information-theoretic) is known to fully characterize this set.


A novel approach in the program is a principle termed No-Disturbance-without-Uncertainty (NDWU) formulated by Sun, Zhou and Yu in \cite{SunYu2019}, see also \cite{ZSY24, SZZW+26, SYC18,SSYC22, Hsu25}.
This principle states that the disturbance caused by one sharp local measurement on a subsequent (incompatible) measurement is bounded by a product of two uncertainties, the initial uncertainty of the observable and a second uncertainty depending on the pair of measurements.
For the case of two binary measurements, this NDWU principle was shown to be a single quadratic constraint involving one sequential overlap parameter $c$ for the two measurements and two individual outcome averages. In the simplest two-input binary-output Bell scenario, when applied to the four states that a distant measurement by Alice can prepare on Bob's side, this NDWU principle yields a criterion on the allowed behaviors.
In particular Ref.~\cite{SunYu2019} derived Tsirelson's bound for the CHSH inequality \cite{CHSH1969}, obtained constraints on a three-parameter noisy-box family, and recovered quantum boundaries on particular slices in this Bell scenario.
They left the question open whether every behavior passing the criterion admits a quantum realization.
We prove that implication and identify the quantum set with the convex hull of the behaviors satisfying the NDWU principle.


The paper is organized as follows. We first state the NDWU principle, establish the nonconvexity of the associated sets, and then  prove constructively that the convex hull of behaviors obeying this criterion on one side generates the eight-dimensional quantum set $\mathcal{Q}_{2222}$. 
We show how the principle gives a one-parameter family of second-order cone programs for identifying the quantum value of a linear Bell functional in $\mathcal{Q}_{2222}$. We then consider a slightly extended Bell scenario in which one player has two binary measurements while the other player has an arbitrary number of measurements  and use a recent result \cite{PanahiWolfe2025} to show how the convex hull of behaviors satisfying NDWU on one side also generates the quantum set in this scenario. We conclude by briefly discussing the extensions of the principle to scenarios with more outcomes while leaving this question open for forthcoming work. 


\textit{The No-Disturbance-without-Uncertainty principle.---}
Consider a bipartite Bell experiment in which Alice and Bob each choose one of $n$ measurement settings and each obtain one of $m$ outcomes. Let us denote $x\in X$ and $y\in Y$ for the settings and $a\in A$ and $b\in B$ for the outcomes, with $|X|=|Y|=n$ and $|A|=|B|=m$. The parties observe outcomes given inputs according to a  family of conditional probabilities $\{P(a,b\mid x,y)\}$ termed a behavior in the Bell scenario $(n,m;n,m)$.
The No-Signalling set $\mathcal{NS}(n,m;n,m)$ in this Bell scenario consists of those behaviors that are nonnegative and normalized, i.e., $\sum_{a,b}P(a,b\mid x,y)=1$ for every $x,y$, and that obey for all $a,x,y$
$\sum_b P(a,b\mid x,y)
=
P(a\mid x)$, and similarly for all $b,y,x$
$\sum_a P(a,b\mid x,y)
=
P(b\mid y)$.
The local set $\mathcal{L}(n,m;n,m)$ in this Bell scenario consists of those behaviors that admit a hidden-variable decomposition. That is, there is a hidden variable $\lambda$, probabilities $q(\lambda)\ge 0$ with $\sum_\lambda q(\lambda)=1$, and response functions $P_A(a\mid x,\lambda)$ and $P_B(b\mid y,\lambda)$, such that
$P(a,b\mid x,y)
=
\sum_\lambda q(\lambda)\,P_A(a\mid x,\lambda)\,P_B(b\mid y,\lambda)$.
Both $\mathcal{NS}(n,m;n,m)$ and $\mathcal L(n,m;n,m)$ are convex polytopes, while the quantum set $\mathcal Q(n,m;n,m)$ lies between them.
This latter set consists of those behaviors for which there exist Hilbert spaces $\mathcal H_A,\mathcal H_B$, a state $\rho_{AB}$ on $\mathcal H_A\otimes\mathcal H_B$, and POVMs $\{M_{a|x}\}$, $\{N_{b|y}\}$ such that
$P(a,b\mid x,y)=\operatorname{Tr}\bigl[\rho_{AB}\,(M_{a|x}\otimes N_{b|y})\bigr]$.
One has the standard inclusions
\[
\mathcal L(n,m;n,m)
\subseteq
\mathcal Q(n,m;n,m)
\subseteq
\mathcal{NS}(n,m;n,m)
\]
\cite{Brunner2014}.

We begin with the minimal scenario $n=m=2$ which is the main focus of this work.
For notational convenience we write $\Qset$ for $\mathcal Q(2,2;2,2)$, noting that in this paper $\Qset$ denotes the finite-dimensional tensor-product quantum set. The inputs are $x,y\in\{0,1\}$ and the outputs are $a,b\in\{+1,-1\}$. 
A no-signalling behavior is labelled by the eight parameters for $x, y \in \{0,1\}$ given as
\[
\alpha_x=\mathbb E[A_x],\qquad
\beta_y=\mathbb E[B_y],\qquad
E_{xy}=\mathbb E[A_x B_y],
\]
with
\begin{equation}
P(a,b\mid x,y)
=\frac14\bigl(1+a\alpha_x+b\beta_y+ab E_{xy}\bigr).
\label{eq:ns}
\end{equation}
Jordan's lemma states that any two orthogonal projections on a finite-dimensional Hilbert space can be block-diagonalized simultaneously into blocks of dimension at most two~\cite{Jordan1875}.
Applied to two binary projective measurements, this decomposes one party's Hilbert space into one- and two-dimensional blocks.
In seminal work, Masanes showed that every extreme point of $\Qset$ is realized by a pure two-qubit state and one projective measurement per setting on each party's system~\cite{Masanes2005}.
Let $\Qsharp$ denote the (non-convex) set of $(2,2;2,2)$ behaviors $\{P(a,b\mid x,y)\}$ for which there exist a pure state $\lvert\psi\rangle\in\C^2\otimes\C^2$ and unit vectors $\hat{m}_x, \hat{n}_y\in\R^3$ such that
\begin{equation}
P(a,b\mid x,y)
=
\bigl\langle\psi\bigr\rvert
\frac{I+a\, \hat{m}_x\cdot \vec{\sigma}}{2}
\otimes
\frac{I+b\, \hat{n}_y\cdot \vec{\sigma}}{2}
\bigl\lvert\psi\bigr\rangle,
\label{eq:qsharp}
\end{equation}
where $\vec{\sigma} = (\sigma_x, \sigma_y, \sigma_z)$ denotes the three Pauli matrices.
Masanes' theorem is then the inclusion $\operatorname{ext}(\Qset)\subseteq\Qsharp$, so that we have
\[
\operatorname{ext}(\Qset)
\subseteq
\Qsharp
\subseteq
\Qset
=
\operatorname{conv}\bigl(\operatorname{ext}(\Qset)\bigr),
\]
where $\operatorname{ext}(\Qset)$ denotes the set of extreme points of $\Qset$ and $\operatorname{conv}\bigl(\operatorname{ext}(\Qset)\bigr)$ denotes their convex hull. 
We emphasize that $\Qset$ is the eight-dimensional convex body of all quantum behaviors in the Bell scenario $(2,2;2,2)$, including the local marginals, not just the four-dimensional quantum correlator set~\cite{Tsirelson1980,Landau1988,Masanes2003,Goh2018,BarizienBancal2025}.

We now proceed to the NDWU principle and its physical significance~\cite{SunYu2019}.
Let us consider Bob's two binary measurements $B_0$ and $B_1$, with outcomes labeled $\pm 1$.
The principle concerns four possible experiments: Bob may measure $B_0$ alone, Bob may measure $B_1$ alone, Bob may measure $B_0$ first followed by $B_1$, or Bob may measure $B_1$ first followed by $B_0$. A first assumption made in the principle is that the measurements are sharp, meaning that if either measurement is performed twice in succession, the second outcome equals the first with probability one. Secondly and more importantly, it is assumed that after the outcome of one measurement say $B_0$ has been obtained, the probabilities for a subsequent $B_1$ measurement depend only on that outcome, and in particular they do not depend on how Bob's system was prepared prior to the $B_0$ measurement (we write the assumptions formally in Appendix ~\ref{app:ndwu}). These assumptions together with the quantitative version of the principle applied to the two possible sequences and every preparation of Bob's system, imply that the two post-measurement averages are opposite, i.e.,  
\begin{equation}
\mathbb E[B_1\mid B_0=+1] = c = -\mathbb E[B_1\mid B_0=-1],
\label{eq:c}
\end{equation}
with a parameter \(c\in[-1,1]\) that is a property of the pair $(B_0,B_1)$ and not of any initial prepared state, the derivation is detailed in  Lemma~\ref{lem:opposite} of Appendix ~\ref{app:ndwu}. 



Now consider any prepared state of Bob's system, and let us denote by $v_y$ the average of $B_y$ when that measurement is performed alone for $y \in \{0,1\}$.
An experiment that measures $B_1$ alone has expectation $v_1$.
A sequential experiment, which measures $B_0$ first and then $B_1$, does not in general return the same value.
The outcome of $B_0$ is $+1$ with probability $(1+v_0)/2$ and $-1$ with probability $(1-v_0)/2$.
After those two outcomes, the averages of $B_1$ are $c$ and $-c$ respectively by \eqref{eq:c}, so that the average of $B_1$ at the end of the sequential experiment is
$\mathbb E_{\mathrm{seq}}[B_1]
=
\frac{1+v_0}{2}\,c+\frac{1-v_0}{2}\,(-c)
=
c v_0$. The difference between measuring $B_1$ alone and measuring $B_1$ after first measuring $B_0$ is 
\begin{equation}
D_{B_0\to B_1}
:=
\bigl|v_1-c v_0\bigr|,
\label{eq:D}
\end{equation}
and is termed the disturbance of $B_1$ caused by first measuring $B_0$. Note that if $v_0=\pm 1$ (no uncertainty), the NDWU principle requires the disturbance $D_{B_0\to B_1}$ to vanish and hence \(v_1=\pm c\), that is a measurement whose outcome is certain does not disturb $B_1$.
On the other hand, if $v_0$ is closer to zero, the outcome of $B_0$ is not known in advance, and first measuring $B_0$ may change the average of $B_1$.

The outcome \(B_0=\pm 1\) has average \(v_0\), and \(B_0^2=1\) so that the standard deviation of $B_0$ is
\begin{equation}
\Delta_{B_0} = \sqrt{\mathbb E[B_0^2]-\mathbb E[B_0]^2}
=
\sqrt{1-v_0^2}.
\label{eq:Delta}
\end{equation}
Furthermore, by Lemma \ref{lem:opposite} of Appendix \ref{app:ndwu}, under the stated assumptions it holds that $\mathbb{E}[B_0 \mid B_1 = +1] = c = - \mathbb{E}[B_0 \mid B_1 = -1]$. Thus,
the standard deviation of $B_0$ on either state left after measuring $B_1$ is $\sqrt{1-c^2}$. Denote
\begin{equation}
\Delta_{B_0\mid B_1}
:= \frac12 \sum_{b = \pm 1} \sqrt{1-\mathbb{E}[B_0 \mid B_1=b]^2} 
=
\sqrt{1-c^2}.
\label{eq:Deltac}
\end{equation}
No-Disturbance-without-Uncertainty is then the statement that the disturbance cannot exceed the product of these two standard deviations~\cite{SunYu2019}, i.e.,
\begin{equation}
\Delta_{B_0}\,\Delta_{B_0\mid B_1}
\ge
D_{B_0\to B_1}.
\label{eq:ndwu-prod}
\end{equation}
Squaring both sides of~\eqref{eq:ndwu-prod}, which are nonnegative, gives
\begin{equation}
v_0^2+v_1^2+c^2-2c\,v_0 v_1
\le 1.
\label{eq:ndwu-bin}
\end{equation}
Note that the number \(c\) is fixed by the pair \((B_0,B_1)\), while the pair \((v_0,v_1)\) changes with the state prepared by Alice's measurement \(x\) and outcome \(a\). 

In the Bell experiment Alice's setting $x$ and outcome $a$ prepare Bob's system.
The single-measurement averages on that preparation are the conditional expectations given by (see Appendix \ref{app:ndwu})
\begin{equation}
v_y^{a|x}
=
\mathbb E[B_y\mid A_x=a]
=
\frac{\beta_y+a E_{xy}}{1+a\alpha_x},
\label{eq:EB}
\end{equation}
defined for all $(a,x)$ with $P(a\mid x)>0$.
For $(a,x)$ with $P(a\mid x)=0$ we set $v_y^{a|x}=0$.
NDWU on Bob's side requires one and the same $c_B$ for every such preparation (this reflects the condition that $c_B$ depends only on $(B_0,B_1)$ and not on the preparation).
Analogously we have
\begin{equation}
u_x^{b|y}
=
\mathbb E[A_x\mid B_y=b]
=
\frac{\alpha_x+b E_{xy}}{1+b\beta_y},
\label{eq:EA}
\end{equation}
and we set $u_x^{b|y}=0$ whenever $P(b\mid y)=0$.

We now proceed to define the set of behaviors satisfying the NDWU principle on either side as $\QA_A$, $\QA_B$, and on both sides as $\QY$. We also refer to the constraints \eqref{eq:B} and \eqref{eq:A} as the common-overlap NDWU criterion.
\begin{definition}
\label{def:sets}
A no-signalling $(2,2;2,2)$ behavior $\{P(a,b\mid x,y)\}$ belongs to the set $\QA_B$ if there exists $c_B\in[-1,1]$ (which may depend on $P$ but not on $(x,a)$) such that
\begin{equation}
\bigl(v_0^{a|x}\bigr)^2
+\bigl(v_1^{a|x}\bigr)^2
+c_B^2
-2c_B\,v_0^{a|x}v_1^{a|x}
\le 1
\label{eq:B}
\end{equation}
holds for every $(x,a)$ with $P(a\mid x)>0$.
It belongs to the set $\QA_A$ if there exists $c_A\in[-1,1]$ (which may depend on $P$ but not on $(y,b)$) such that
\begin{equation}
\bigl(u_0^{b|y}\bigr)^2
+\bigl(u_1^{b|y}\bigr)^2
+c_A^2
-2c_A\,u_0^{b|y}u_1^{b|y}
\le 1
\label{eq:A}
\end{equation}
holds for every $(y,b)$ with $P(b\mid y)>0$.
The set of behaviors that satisfy NDWU on both parties is
\[
\QY
:=
\QA_A\cap\QA_B.
\]
\end{definition}
Note that the sets defined here depend only on the observed probabilities in the Bell experiment and the existence of a common overlap parameter; the sequential measurement assumptions only provide their operational motivation.  

For any given $(v_0,v_1)\in[-1,1]^2$ the values of $c$ compatible with the NDWU condition~\eqref{eq:ndwu-bin} form a closed interval $I(v_0,v_1)$ given as (see the derivation in Appendix~\ref{app:ndwu})
\begin{equation}
\begin{aligned}
I(v_0,v_1)
:=
\Bigl[&
v_0 v_1-\sqrt{(1-v_0^2)(1-v_1^2)},\\
&
v_0 v_1+\sqrt{(1-v_0^2)(1-v_1^2)}
\Bigr].
\end{aligned}
\label{eq:Iv}
\end{equation}
Hence $P\in\QA_B$ if and only if
\begin{equation}
\bigcap_{\substack{x,a\\ P(a\mid x)>0}}
I\bigl(v_0^{a|x},v_1^{a|x}\bigr)
\neq\varnothing,
\label{eq:intersectB}
\end{equation}
and analogously $P\in\QA_A$ if and only if the corresponding intersection over the pairs~\eqref{eq:EA} is non-empty.

\textit{conv$(\mathcal A_A)=$conv$(\mathcal A_Y)=\mathcal Q_{2222}$.---}
The sets just defined are constraints on a single behavior $\{P(a,b|x,y)\}$.
However they are not convex, and more interestingly not every local (hence quantum) behavior even obeys them; an explicit local point outside $\QY$ is given in Appendix~\ref{app:nonconvex}. While every local deterministic behavior belongs to $\QY$, mixtures of local deterministic behaviors may use a different overlap parameter for each deterministic term, while the NDWU principle requires a single overlap parameter for the whole observed behavior, in particular  mixing local deterministic behaviors that require opposite overlaps $c=\pm 1$ can produce a local behavior admitting no common overlap.
Nevertheless we prove that the convex hull of $\QY$ (and in fact already of $\QA_A$ and $\QA_B$ separately) constitutes the quantum set $\Qset$.

\begin{theorem}
\label{thm:hull}
It holds that
\[
\operatorname{conv}(\QA_A)
=
\operatorname{conv}(\QA_B)
=
\operatorname{conv}(\QY)
=
\Qset.
\]
Furthermore, every point of $\Qset$ is a convex combination of at most eight points of $\Qsharp$, and therefore of $\QY$.
\end{theorem}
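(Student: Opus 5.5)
The plan is to prove the chain
\[
\Qset\subseteq\operatorname{conv}(\Qsharp)\subseteq\operatorname{conv}(\QY)\subseteq\operatorname{conv}(\QA_B)\subseteq\Qset,
\]
together with the same chain with $\QA_A$ in place of $\QA_B$. The first inclusion is Masanes' theorem, $\operatorname{ext}(\Qset)\subseteq\Qsharp$, combined with $\Qset=\operatorname{conv}(\operatorname{ext}(\Qset))$. The third is trivial since $\QY=\QA_A\cap\QA_B$. So two things remain: (i) $\Qsharp\subseteq\QY$, and (ii) $\QA_B\subseteq\Qset$ (and symmetrically $\QA_A\subseteq\Qset$). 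Since $\Qset$ is convex, (ii) gives $\operatorname{conv}(\QA_B)\subseteq\Qset$.

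\emph{Step (i).} Take $P\in\Qsharp$, realized by a pure state $\lvert\psi\rangle$ and unit vectors $\hat m_x,\hat n_y$. I set $c_B:=\hat n_0\cdot\hat n_1$. For each $(x,a)$ with $P(a\mid x)>0$, Alice's outcome steers Bob to a normalized qubit state with Bloch vector $\vec r$, where $|\vec r|\le 1$. The conditional averages \eqref{eq:EB} are then $v_y^{a|x}=\hat n_y\cdot\vec r$. The Gram matrix of $(\vec r,\hat n_0,\hat n_1)$ is positive semidefinite, and its determinant gives
\[
(v_0)^2+(v_1)^2-2c_B v_0v_1\le|\vec r|^2(1-c_B^2)\le 1-c_B^2 ,
\]
which is exactly \eqref{eq:B}. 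The same argument with $c_A:=\hat m_0\cdot\hat m_1$ and Alice's steered states gives \eqref{eq:A}. Hence $\Qsharp\subseteq\QY$.

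\emph{Step (ii), the main step.} Take $P\in\QA_B$ with overlap $c_B$. I choose unit vectors $\hat n_0,\hat n_1$ in the $xz$-plane with $\hat n_0\cdot\hat n_1=c_B$. For each $(x,a)$ with $P(a\mid x)>0$, I want a vector $\vec r^{a|x}$ in that plane with $\hat n_y\cdot\vec r^{a|x}=v_y^{a|x}$ and $|\vec r^{a|x}|\le 1$.
\begin{itemize}
\item If $|c_B|<1$, the vectors $\hat n_0,\hat n_1$ form a basis of the plane, so $\vec r^{a|x}$ is uniquely determined. Its squared norm is $\bigl(v_0^2+v_1^2-2c_Bv_0v_1\bigr)/(1-c_B^2)$, which is at most $1$ precisely by \eqref{eq:B}.
\item If $c_B=\pm1$, the interval \eqref{eq:Iv} forces $v_1=\pm v_0$, so it suffices to take $\vec r^{a|x}=v_0\hat n_0$.
\end{itemize}
I then define the assemblage
\[
\sigma_{a|x}:=P(a\mid x)\,\tfrac12\bigl(I+\vec r^{a|x}\cdot\vec\sigma\bigr),
\]
setting $\sigma_{a|x}=0$ when $P(a\mid x)=0$.

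The key check is that $\sum_a\sigma_{a|x}$ does not depend on $x$. I deliberately choose no component perpendicular to the plane, so the Bloch vector of $\sum_a\sigma_{a|x}$ lies in the plane. It is therefore determined by its projections onto $\hat n_0,\hat n_1$, namely $\sum_a P(a\mid x)v_y^{a|x}=\beta_y$, which is independent of $x$ by no-signalling. In the degenerate case $c_B=\pm1$ the argument is the same along the single direction $\hat n_0$.

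So $\{\sigma_{a|x}\}$ is a no-signalling qubit assemblage with common reduced state $\rho_B$. By the Schr\"odinger--HJW theorem, purifying $\rho_B$ yields Alice POVMs that prepare exactly this assemblage. With Bob's projective measurements $\tfrac12(I+b\,\hat n_y\cdot\vec\sigma)$, this reproduces $P(a,b\mid x,y)=\operatorname{Tr}\bigl[\sigma_{a|x}\tfrac12(I+b\,\hat n_y\cdot\vec\sigma)\bigr]$ through \eqref{eq:EB}. Hence $P\in\Qset$. The case $\QA_A\subseteq\Qset$ follows by exchanging the roles of Alice and Bob.

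I expect the delicate point to be this step: reconciling the per-preparation Bloch vectors into one no-signalling assemblage, and treating the degenerate cases $c_B=\pm1$ and $P(a\mid x)=0$. Choosing zero perpendicular components is what makes it work.

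\emph{Cardinality.} $\Qsharp$ is the continuous image of the connected set of pure two-qubit states times $(S^2)^4$, so it is connected, and it lives in $\R^8$. By the Fenchel--Bunt strengthening of Carath\'eodory's theorem, every point in the convex hull of a connected subset of $\R^8$ is a convex combination of at most $8$ of its points. Since $\operatorname{conv}(\Qsharp)=\Qset$, every point of $\Qset$ is a convex combination of at most eight points of $\Qsharp\subseteq\QY$.
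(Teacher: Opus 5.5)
Your proposal is correct, and its main step is the paper's own construction. For $P\in\QA_B$ the paper also takes $\hat n_0=(0,0,1)$ and $\hat n_1=(s_B,0,c_B)$, uses the unique in-plane Bloch vectors (and $v_0^{a|x}\hat n_0$ when $c_B=\pm1$), and gets the $x$-independence of $\sum_a\tau_{a|x}$ from no-signalling. It then purifies $\rho_B$, writing the HJW step out explicitly as $M_{a|x}=\bigl(\rho_B^{-1/2}\tau_{a|x}\rho_B^{-1/2}\bigr)^{\mathsf T}$ on the support of $\rho_B$. The cardinality bound is the same Fenchel-type refinement of Carath\'eodory; connectedness suffices, as you use it.

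Your Gram-determinant argument for $\Qsharp\subseteq\QY$ is slightly cleaner than the paper's: it treats $|c_B|=1$ uniformly, whereas the paper's length computation divides by $1-c_B^2$.

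The one real difference is the inclusion $\Qset\subseteq\operatorname{conv}(\Qsharp)$. You cite Masanes' theorem together with $\Qset=\operatorname{conv}(\operatorname{ext}\Qset)$. The latter is Minkowski's theorem and requires $\Qset$ to be compact, which you do not justify. It is true, but the standard reason is exactly the argument the paper gives instead. One Naimark-dilates to projective measurements, applies Jordan's lemma on both sides, and decomposes each block state into pure states. This shows directly that every finite-dimensional quantum behavior is a finite mixture of points of $\Qsharp$, with no appeal to extreme points. Compactness of $\Qset$ then follows from compactness of $\Qsharp$. Adding that line, or replacing the extreme-point step by the Jordan decomposition, makes your argument self-contained.
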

The proof of Thm. \ref{thm:hull} is given in Appendix~\ref{app:hierarchy}. In Appendix~\ref{app:sharp}, we investigate the relationship of these sets with $\Qsharp$ and show that $\Qsharp$ is strictly smaller than $\QY$.

\textit{Bell optimization by second-order cone programs.---}
The equality $\operatorname{conv}(\mathcal A_B)=\mathcal Q_{2222}$ allows us to optimize any linear Bell functional using only Bob's overlap parameter. Consider a linear Bell expression $I(P)$ in this scenario given as
\[
I(P)=\sum_{x,y,a,b} I_{xy}^{ab}P(a,b\mid x,y),
\]
for some real coefficients $I_{xy}^{ab}$.
For a fixed $c\in[-1,1]$, let us set $s=\sqrt{1-c^2}$ and for each $(a,x)$, define
\[
t_{a|x}=1+a\alpha_x,\quad
p_{a|x}=\beta_0+aE_{x0},\quad
q_{a|x}=\beta_1+aE_{x1}.
\]
Then the common-overlap condition \eqref{eq:B} can be written on no-signalling behaviors as
\begin{equation}
\left\|
\begin{pmatrix}
q_{a|x}-c\,p_{a|x}\\
s\,p_{a|x}
\end{pmatrix}
\right\|_2
\le s\,t_{a|x}
\qquad\text{for every }(a,x).
\label{eq:socp-cone-main}
\end{equation}
This is seen by simply observing that the conditional averages are $v_{0}^{a|x} = p_{a|x}/t_{a|x}$ and $v_{1}^{a|x} = q_{a|x}/t_{a|x}$. 
And for $|c|<1$, $(q_{a|x} - c p_{a|x})^2 + (s p_{a|x})^2 \leq (1 - c^2) t_{a|x}^2$ is just \eqref{eq:B} multiplied by $t_{a|x}^2$.
At $c=\pm 1$, it gives $q_{a|x}=c\,p_{a|x}$. In the case when $t_{a|x} = 0$, positivity of the probabilities implies $p_{a|x} = q_{a|x} = 0$ so \eqref{eq:socp-cone-main} also covers this case. 

Let $\Phi_I^B(c)$ be the maximum of $I(P)$ over no-signalling behaviors satisfying~\eqref{eq:socp-cone-main}.
For fixed $c$, this is a second-order cone program.
Since a linear functional has the same supremum on a set and its convex hull, we obtain that the maximum quantum value for the Bell inequality can be expressed as
\begin{equation}
\omega_Q(I)
=
\max_{c\in[-1,1]}\Phi_I^B(c).
\label{eq:socp-quantum-value}
\end{equation}
The maximum is attained because the joint feasible set of $(P,c)$ is compact.
Thus the optimization for finding the quantum value in these scenarios reduces to a scalar maximization over a family of second-order cone programs. 


\textit{One party with two binary measurements.---}
The construction extends directly when only one party has two binary measurements, using the qubit reduction discussed in \cite{PanahiWolfe2025}.
Consider the Bell scenario in which Alice has settings $x\in\{0,1\}$ and outcomes $a\in\{\pm 1\}$, while Bob has $n$ settings and an arbitrary finite number $m_y$  of outcomes for setting $y$.
Let us write $\mathbf m=(m_y)_y$ and denote the quantum set by $\mathcal Q(2,2;n,\mathbf m)$.
Let $\mathcal A_A^{(n,\mathbf m)}$ be the set of no-signalling behaviors admitting one $c_A\in[-1,1]$ on Alice's side such that
\begin{equation}
\bigl(u_0^{b|y}\bigr)^2+\bigl(u_1^{b|y}\bigr)^2+c_A^2-2c_A u_0^{b|y}u_1^{b|y}\le 1
\label{eq:general-one-sided-ndwu}
\end{equation}
for every $(b,y)$ with $P(b\mid y)>0$, where $u_x^{b|y}=\mathbb E[A_x\mid B_y=b]$.
Then it holds that
\begin{equation}
\mathcal Q(2,2;n,\mathbf m)
=
\operatorname{conv}\bigl(\mathcal A_A^{(n,\mathbf m)}\bigr).
\label{eq:general-one-sided-hull}
\end{equation}
The proof follows analogously to that of Thm. \ref{thm:hull} and is given in Appendix~\ref{app:beyond}.

\textit{Measurements with more outcomes.---}
When both parties have two measurements with $m$ outcomes, the binary result gives a necessary condition by coarse-graining.
For each setting, choose nonempty proper subsets $S_x,T_y\subset\{1,\ldots,m\}$ and record whether $a\in S_x$ and $b\in T_y$.
Denote the resulting binary behavior by $P_{S,T}$, and define
\[
\mathcal R_m^{\mathrm{bin}}
=
\left\{
P\in\mathcal{NS}(2,m;2,m):
P_{S,T}\in\operatorname{conv}(\mathcal A_Y)
\text{ for all }S,T
\right\}.
\]
Coarse-graining a quantum measurement amounts to summing measurement operators.
Every quantum behavior therefore satisfies
\begin{equation}
\mathcal Q(2,m;2,m)\subseteq\mathcal R_m^{\mathrm{bin}}.
\label{eq:binning-outer-main}
\end{equation}
For more settings, the same test applies to every selected pair of settings on each side (see details in  Appendix~\ref{app:beyond}).
This is an outer approximation obtained from the binary theorem, an intrinsic extension of the principle to more inputs and outputs is left for future work.

\textit{Conclusions.---}
The relation between local uncertainty constraints and the set of nonlocal correlations is a central question in the search for physical principles underlying quantum theory. We have shown that the common-overlap no-disturbance-without-uncertainty criterion of \cite{SunYu2019}, when closed under shared classical randomness, generates exactly the finite-dimensional quantum set whenever one party has two binary measurements. In particular, it recovers the full set of quantum behaviors in the simplest CHSH Bell scenario. 

The proof makes explicit how conditional probabilities satisfying the criterion can be represented by positive qubit operators on one side, while no-signalling ensures that the operators have a common average state on that side. A standard purification of that state then supplies the quantum realization. The same characterization reduces optimization of linear Bell expressions in these scenarios to a scalar optimization over a family of second-order cone programs. The convexification of the NDWU criterion is essential; the common-overlap condition on its own is not preserved under classical shared randomization. When both parties have measurements with more outcomes, binary coarse-grainings of the behavior provide necessary outer conditions, but an intrinsic extension of the principle remains open for future work. 

\begin{acknowledgments}
We acknowledge support from the General Research Fund (GRF) Grant No.\ 17307925, and the Research Impact Fund (RIF) Grant No.\ R7035-21. Language-model tools were used during the preparation of this manuscript for discussion of the arguments, assistance with algebraic checks and editing. The theorems and proofs are the author's and the author is responsible for their correctness. 
\end{acknowledgments}

\onecolumngrid
\appendix
\newpage

\section{Quantitative version of the NDWU principle}
\label{app:ndwu}

Consider the Bell scenario in which two parties, Alice and Bob, each perform two measurements with two outcomes. Alice's input is \(x\in\{0,1\}\) and her outcome is \(a\in\{+1,-1\}\), analogously Bob's input is \(y\in\{0,1\}\) and his outcome is \(b\in\{+1,-1\}\), giving rise to a behavior \(\{P(a,b\mid x,y)\}\). Alice's input and outcome pair $(x,a)$ prepare Bob's system state. The conditional average of Bob's measurement \(B_y\) is then given by
\[
\mathbb E[B_y\mid A_x=a]
=
\sum_{b=\pm 1} b\,P(b\mid a,x,y)
=
\frac{\sum_{b=\pm 1} b\,P(a,b\mid x,y)}{P(a\mid x)},
\]
whenever \(P(a\mid x)>0\) where we used the no-signalling condition. 
Writing explicitly
$P(a,b\mid x,y)
=
\frac14\bigl(1+a\,\mathbb E[A_x]+b\,\mathbb E[B_y]+ab\,\mathbb E[A_x B_y]\bigr)$,
one has \(P(a\mid x)=(1+a\,\mathbb E[A_x])/2\) and
$\sum_{b=\pm 1} b\,P(a,b\mid x,y)
=
\frac{\mathbb E[B_y]+a\,\mathbb E[A_x B_y]}{2}$.
Dividing the two expressions gives
\begin{equation}
v_y^{a|x}
:=
\mathbb E[B_y\mid A_x=a]
=
\frac{\mathbb E[B_y]+a\,\mathbb E[A_x B_y]}{1+a\,\mathbb E[A_x]}.
\label{eq:cond-app}
\end{equation}
If \(P(a\mid x)=0\), the pair \((x,a)\) does not occur and imposes no constraint.

The NDWU principle is predicated on two assumptions that we restate here.
\begin{assumption}
\label{ass:sharp-app}
Both measurements are repeatable. If either measurement is performed twice in succession, the second outcome equals the first with probability one.
\end{assumption}

\begin{assumption}
\label{ass:reset-app}
The state left by either measurement depends only on the outcome obtained. It does not depend on the preparation that entered the measurement.
\end{assumption}

The quantitative relation is imposed on two experiments. In the first, \(B_0\) is measured and then \(B_1\). In the second, \(B_1\) is measured and then \(B_0\). Each inequality is required for every preparation of the system, including the preparations left by the first measurement of either experiment. 
\begin{lemma}
\label{lem:opposite}
Let \(B_0,B_1\) satisfy Assumptions~\ref{ass:sharp-app} and~\ref{ass:reset-app}. Suppose that, for every preparation, the disturbance caused by measuring \(B_0\) before \(B_1\) denoted $D_{B_0 \to B_1}$ does not exceed the product of the two uncertainties, and that the same holds for the disturbance when \(B_1\) is measured before \(B_0\). Then there exists \(c\in[-1,1]\) such that
\[
\mathbb E[B_1\mid B_0=+1]
=
-\mathbb E[B_1\mid B_0=-1]
=
\mathbb E[B_0\mid B_1=+1]
=
-\mathbb E[B_0\mid B_1=-1]
=c.
\]
\end{lemma}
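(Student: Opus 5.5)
By Assumption~\ref{ass:reset-app}, the state left after $B_0$ yields outcome $b$ does not depend on the preparation. So the four numbers
\[
c_{\pm}:=\mathbb E[B_1\mid B_0=\pm1],\qquad d_{\pm}:=\mathbb E[B_0\mid B_1=\pm1]
\]
are properties of the pair $(B_0,B_1)$ alone. The plan is to apply the NDWU inequality for each order to a few well-chosen preparations, namely the post-measurement states themselves. This forces these four numbers to coincide up to sign.

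\emph{Step 1: $c_-=-c_+$.} Let the preparation be the state left after $B_0$ gives $+1$. By Assumption~\ref{ass:sharp-app}, repeating $B_0$ returns $+1$ with certainty, so $v_0=1$ and $\Delta_{B_0}=0$. Measuring $B_0$ then $B_1$ leaves the state unchanged and gives $\mathbb E_{\mathrm{seq}}[B_1]=c_+$, while $v_1=c_+$ as well. The disturbance is therefore zero, and this preparation gives no information. Take instead a general preparation with single-measurement averages $v_0,v_1$. Sequentially, $\mathbb E_{\mathrm{seq}}[B_1]=\tfrac{1+v_0}{2}c_++\tfrac{1-v_0}{2}c_-$. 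The NDWU inequality for the order $B_0\to B_1$ reads
\[
\bigl|v_1-\tfrac{1+v_0}{2}c_+-\tfrac{1-v_0}{2}c_-\bigr|\le\sqrt{1-v_0^2}\,\Delta_{B_0\mid B_1},
\]
and it must hold for \emph{every} preparation. Now choose the preparation left by the opposite experiment, namely the state after $B_1=+1$. On this state $v_1=1$ by repeatability and $v_0=d_+$. The symmetric choice, the state after $B_1=-1$, gives $v_1=-1$ and $v_0=d_-$.

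\emph{Step 2: combine.} These choices give four inequalities, two from each ordering. In each the preparation is an eigenstate of the second measurement, and the left-hand side collapses to an affine expression in the four unknowns. Consider first the extreme cases $d_\pm=\pm1$. There the uncertainty $\sqrt{1-v_0^2}$ vanishes, forcing exact equalities such as $1=c_+$ together with consistency. In the generic case, I would add the two inequalities coming from $B_1=\pm1$ preparations. The terms should then combine into $|1-d_+\cdot(\ldots)|$-type expressions, and an AM--GM/Cauchy--Schwarz step against the right-hand side should force $c_++c_-=0$. The same argument with the roles of $B_0$ and $B_1$ exchanged gives $d_++d_-=0$.

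\emph{Step 3: $c_+=d_+$.} A consistency relation should link the two. Take a preparation that is an equal mixture; the prepared states are arbitrary, so mixtures of post-measurement states are available. In this mixture the joint sequential statistics $P(B_0=b,B_1=b')$ computed from either order must satisfy the NDWU bounds with vanishing disturbance at the eigenstates. Combined with Step 2, the symmetric relation $\mathbb E[B_0B_1]_{\mathrm{seq}}$ computed both ways then gives $c_+=d_+$. Equivalently, one imposes the NDWU bound on the preparation left by $B_1=+1$ in the order $B_0\to B_1$: there $v_1=1$, $v_0=d_+$, and the sequential average is $c_+d_+$. This yields $1-c_+d_+\le\sqrt{1-d_+^2}\sqrt{1-c_+^2}$. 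Squaring gives $(c_+-d_+)^2\le0$, hence $c_+=d_+$.

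The main obstacle is Step 2: choosing preparations that isolate $c_++c_-$. Whether $\Delta_{B_0\mid B_1}$ should be taken from the general definition $\tfrac12\sum_b\sqrt{1-d_b^2}$ or already as symmetric affects the algebra. If it is taken from the general definition, I would first try the squared inequality $(1-\bar c\,d_+ - \delta)^2\le(1-d_+^2)(1-\bar d^2)$, with $\bar c,\delta$ the mean and half-difference of $c_\pm$, and read off $\delta=0$ from the vanishing of a sum of squares.
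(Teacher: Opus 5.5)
You pick the right preparations (the states left by $B_1=\pm1$, fed into the order $B_0\to B_1$). However, Step~2, which carries the whole lemma, cannot work as planned. Write $\bar c=(c_++c_-)/2$, $\delta=(c_+-c_-)/2$, $e=(d_+-d_-)/2$ and $\Delta=\tfrac12\bigl(\sqrt{1-d_+^2}+\sqrt{1-d_-^2}\bigr)$. On these two states the bounds read $1-\bar c-d_+\delta\le\Delta\sqrt{1-d_+^2}$ and $1+\bar c+d_-\delta\le\Delta\sqrt{1-d_-^2}$. Their sum, $1-e\delta\le\Delta^2$, does not contain $\bar c$ at all, so no AM--GM or Cauchy--Schwarz step applied to it can yield $c_++c_-=0$.

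In fact the order $B_0\to B_1$ alone cannot force $c_++c_-=0$ by any route. Take $c_+=1$, $c_-=-\tfrac45$, $d_\pm=\pm\tfrac12$, so $\Delta=\sqrt3/2$. The disturbances on the $B_1$-eigenstates are $\tfrac{9}{20}$ and $\tfrac{13}{20}$, against the bound $\tfrac34$. Both sides vanish on the $B_0$-eigenstates. The left side is convex and the right side concave in $(v_0,v_1)$, so the bound persists on all mixtures. Yet $c_++c_-=\tfrac15$. The same objection defeats your closing sum-of-squares idea; with your labels, $\delta=0$ would even mean $c_+=c_-$.

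The missing idea is to play the two orderings against each other, as the paper does.
\begin{itemize}
\item The chord bound $\Delta\le\sqrt{1-e^2}$ comes from strict concavity of $\sqrt{1-t^2}$ at $d_+$ and $-d_-$, with equality iff $d_+=-d_-$. It turns the summed inequality into $e\delta\ge e^2$.
\item The identical computation for the order $B_1\to B_0$ on the states left by $B_0=\pm1$, with $\Delta'=\tfrac12\bigl(\sqrt{1-c_+^2}+\sqrt{1-c_-^2}\bigr)$, gives $e\delta\ge\delta^2$.
\item Adding yields $(\delta-e)^2\le0$, so $\delta=e$. Then $1-e^2\le\Delta^2\le1-e^2$ and $1-\delta^2\le\Delta'^2\le1-\delta^2$ are both tight.
\item Tightness forces $d_+=-d_-$ from the $B_0\to B_1$ chain and $c_+=-c_-$ from the $B_1\to B_0$ chain. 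This is the reverse of the assignment in your plan.
\item Finally $\delta=e$ reads $c_+=d_+$.
\end{itemize}

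Your Step~3 has the same one-ordering defect. With $\Delta_{B_0\mid B_1}$ built from $d_\pm$, the bound on the state left by $B_1=+1$ is $1-d_+^2$, not $\sqrt{1-d_+^2}\sqrt{1-c_+^2}$; the latter already presupposes $d_\pm=\pm c_+$. So this step only gives $c_+d_+\ge d_+^2$, and you still need the mirrored $c_+d_+\ge c_+^2$ from the other order. Your alternative via mixtures also fails, because agreement of $\mathbb E[B_0B_1]$ between the two orders is not among the hypotheses.
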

\begin{proof}
Denote \(c_+=\mathbb E[B_1\mid B_0=+1]\) and \(c_-=\mathbb E[B_1\mid B_0=-1]\), and denote \(c'_+=\mathbb E[B_0\mid B_1=+1]\) and \(c'_-=\mathbb E[B_0\mid B_1=-1]\), with no relation assumed among these four numbers. On a prepared state with averages \((v_0,v_1)\), measuring \(B_0\) first returns the average \((c_++c_-)/2+v_0(c_+-c_-)/2\) for the latter \(B_1\). The quantitative version of the NDWU relation then reads
\[
\left\lvert v_1-\frac{c_++c_-}{2}-v_0\frac{c_+-c_-}{2}\right\rvert
\le
\frac{\sqrt{1-c'_+{}^2}+\sqrt{1-c'_-{}^2}}{2}\,\sqrt{1-v_0^2}.
\]
The exchanged relation is the same with the two measurements interchanged. The state left by \(B_1=+1\) has averages \((c'_+,1)\), and the state left by \(B_1=-1\) has averages \((c'_-,-1)\). Substitute these two preparations, the sequential averages lie in \([-1,1]\), so that we obtain
\begin{align*}
1-\frac{c_++c_-}{2}-c'_+\frac{c_+-c_-}{2}
&\le
\frac{\sqrt{1-c'_+{}^2}+\sqrt{1-c'_-{}^2}}{2}\,\sqrt{1-c'_+{}^2},\\
1+\frac{c_++c_-}{2}+c'_-\frac{c_+-c_-}{2}
&\le
\frac{\sqrt{1-c'_+{}^2}+\sqrt{1-c'_-{}^2}}{2}\,\sqrt{1-c'_-{}^2}.
\end{align*}
Adding the two inequalities gives
\[
1-\frac{c_+-c_-}{2}\,\frac{c'_+-c'_-}{2}
\le
\left(\frac{\sqrt{1-c'_+{}^2}+\sqrt{1-c'_-{}^2}}{2}\right)^2.
\]
The graph of \(\sqrt{1-t^2}\) on \([-1,1]\) is the upper half of the unit circle, so the chord between any two points lies below the graph. Applied to the points \(c'_+\) and \(-c'_-\),
\[
\frac{\sqrt{1-c'_+{}^2}+\sqrt{1-c'_-{}^2}}{2}
\le
\sqrt{1-\left(\frac{c'_+-c'_-}{2}\right)^2},
\]
with equality only when \(c'_+=-c'_-\). Therefore
\[
\frac{c_+-c_-}{2}\,\frac{c'_+-c'_-}{2}
\ge
\left(\frac{c'_+-c'_-}{2}\right)^2.
\]
The same argument with the two measurements exchanged gives
\[
\frac{c_+-c_-}{2}\,\frac{c'_+-c'_-}{2}
\ge
\left(\frac{c_+-c_-}{2}\right)^2.
\]
Denote \(d=(c_+-c_-)/2\) and \(e=(c'_+-c'_-)/2\). Adding the two inequalities gives \(2de\ge d^2+e^2\), hence
\[
0\le(d-e)^2=d^2+e^2-2de\le 0,
\]
and therefore \(d=e\). Substituting this equality into the earlier bounding chain gives
\[
1-e^2
\le
\left(\frac{\sqrt{1-c'_+{}^2}+\sqrt{1-c'_-{}^2}}{2}\right)^2
\le
1-e^2.
\]
We obtain that equality holds, giving \(c'_+=-c'_-\). Analogously, the exchanged argument gives \(c_+=-c_-\). Finally \(d=e\) implies \(c_+=c'_+\), which is the claimed common parameter \(c\).
\end{proof}


Now consider a preparation on which \(B_y\) has average \(v_y\), so \(|v_y|\le 1\). Since the outcome is \(\pm 1\), one has \(B_y^2=1\), and the standard deviation is
$\sqrt{\mathbb E[B_y^2]-\mathbb E[B_y]^2}
=
\sqrt{1-v_y^2}$.
In particular, the standard deviation of \(B_0\) on the preparation is \(\Delta_{B_0}=\sqrt{1-v_0^2}\). By Lemma \ref{lem:opposite}, after the outcome $B_1 = +1$, the average of $B_0$ is $c$ and similarly after the outcome $B_1 = -1$ the average of $B_0$ is $-c$. Consequently
$\Delta_{B_0\mid B_1}
:= \frac12 \sum_{b = \pm 1} \sqrt{1 - \mathbb{E}[B_0 \mid B_1 = b]^2}
=
\sqrt{1-c^2}$.

Measuring \(B_1\) alone returns \(v_1\). Measuring \(B_0\) first returns \(+1\) with probability \((1+v_0)/2\) and \(-1\) with probability \((1-v_0)/2\). The subsequent averages of \(B_1\) are \(c\) and \(-c\), so the average of \(B_1\) at the end of the sequential experiment is
$\frac{1+v_0}{2}\,c+\frac{1-v_0}{2}\,(-c)=cv_0$. The disturbance is the absolute difference of the two averages,
$D_{B_0\to B_1}
:=
\lvert v_1-cv_0\rvert$.
No-disturbance-without-uncertainty is the statement that this disturbance cannot exceed the product of the two standard deviations~\cite{SunYu2019},
\begin{equation}
\sqrt{1-v_0^2}\,\sqrt{1-c^2}
\ge
\lvert v_1-cv_0\rvert.
\label{eq:principle-app}
\end{equation}
Both sides are nonnegative. Squaring and rearranging gives
\begin{equation}
v_0^2+v_1^2+c^2-2c\,v_0 v_1\le 1.
\label{eq:quad-app}
\end{equation}
This is the quadratic used in the main text. The same quadratic, with the parties exchanged, is the condition on Alice's side. We observe the following rewriting of the condition on the parameter $c \in [-1,1]$ to satisfy~\eqref{eq:quad-app}.

\begin{lemma}
\label{lem:interval}
Let \(v_0,v_1\in[-1,1]\). A number \(c\in[-1,1]\) satisfies~\eqref{eq:quad-app} if and only if \(c\) lies in the closed interval
\begin{equation}
I(v_0,v_1)
:=
\left[
v_0 v_1-\sqrt{(1-v_0^2)(1-v_1^2)},\;
v_0 v_1+\sqrt{(1-v_0^2)(1-v_1^2)}
\right].
\label{eq:interval-app}
\end{equation}
\end{lemma}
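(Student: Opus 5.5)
The plan is to read \eqref{eq:quad-app} as a quadratic inequality in the single unknown $c$, with $v_0,v_1$ held fixed, and then compute its solution set. Rearranging gives
\[
c^2-2v_0v_1\,c+\bigl(v_0^2+v_1^2-1\bigr)\le 0 .
\]
This is an upward-opening parabola in $c$, so the solution set is the closed interval between its real roots whenever they exist. By the quadratic formula the roots are $c_\pm=v_0v_1\pm\sqrt{\Delta'}$, where
\[
\Delta'=v_0^2v_1^2-v_0^2-v_1^2+1 .
\]
The key algebraic step is to recognise the factorisation
\[
\Delta'=(1-v_0^2)(1-v_1^2).
\]
Since $|v_0|,|v_1|\le 1$, this quantity is nonnegative. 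The roots are therefore real, and the solution set in $\mathbb R$ is exactly $I(v_0,v_1)$ as given in \eqref{eq:interval-app}.

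It then remains to reconcile this with the restriction $c\in[-1,1]$. The statement should really be read as saying that the admissible $c$ are those in $I(v_0,v_1)$, so I will verify that $I(v_0,v_1)\subseteq[-1,1]$. With that inclusion, intersecting with $[-1,1]$ changes nothing. To prove it, write $v_0=\cos\theta_0$ and $v_1=\cos\theta_1$ with $\theta_0,\theta_1\in[0,\pi]$. Then $\sqrt{(1-v_0^2)(1-v_1^2)}=\sin\theta_0\sin\theta_1$, and the endpoints become
\[
v_0v_1\mp\sqrt{(1-v_0^2)(1-v_1^2)}=\cos(\theta_0\pm\theta_1).
\]
Both endpoints are cosines, so they lie in $[-1,1]$. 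Alternatively, Cauchy--Schwarz gives $|v_0v_1|+\sqrt{(1-v_0^2)(1-v_1^2)}\le 1$ directly.

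Both directions of the ``if and only if'' then follow from the equivalence between a quadratic inequality with positive leading coefficient and membership in the closed interval between its roots. I do not expect a real obstacle here. The only points needing care are the factorisation of the discriminant and the containment in $[-1,1]$. The degenerate cases $|v_0|=1$ or $|v_1|=1$, where the interval collapses to the single point $\{v_0v_1\}$, are covered automatically by the same formulas.
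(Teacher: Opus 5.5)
Your proposal is correct and follows essentially the same route as the paper: you treat \eqref{eq:quad-app} as an upward-opening quadratic in $c$, factor the discriminant as $(1-v_0^2)(1-v_1^2)$, and read off the closed interval between the roots. Your trigonometric (or Cauchy--Schwarz) check that both roots lie in $[-1,1]$ supplies a justification for a step the paper only asserts.
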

\begin{proof}
Rearranging~\eqref{eq:quad-app} as a quadratic in \(c\) gives \(c^2-2v_0 v_1\,c+v_0^2+v_1^2-1\le 0\). The discriminant is \(4(1-v_0^2)(1-v_1^2)\) and the roots are $c_{\pm} = v_0 v_1 \pm \sqrt{(1-v_0^2)(1-v_1^2)}$, which both lie in \([-1,1]\). Since the coefficient of $c^2$ is positive, the constraint \(c^2-2v_0 v_1\,c+v_0^2+v_1^2-1\le 0\) holds exactly between these roots in $[c_{-}, c_{+}]$, which is the interval $I(v_0,v_1)$.
\end{proof}

\subsection{Projective qubit measurements obey the NDWU principle}
\label{app:quantum}

Let \(B_0=\hat n_0\cdot\sigma\) and \(B_1=\hat n_1\cdot\sigma\), with unit vectors \(\hat n_0,\hat n_1\in\mathbb R^3\), and set \(c=\hat n_0\cdot\hat n_1\). A preparation with Bloch vector \(\vec r\), \(\lVert\vec r\rVert\le 1\), has averages \(v_y=\vec r\cdot\hat n_y\). The component of \(\vec r\) orthogonal to \(\operatorname{span}\{\hat n_0,\hat n_1\}\) does not increase the length of the projection onto that plane, so
\begin{equation}
(v_1-c v_0)^2
\le
(1-c^2)(1-v_0^2),
\label{eq:proj-bound}
\end{equation}
which is~\eqref{eq:quad-app}. The number \(c=\hat n_0\cdot\hat n_1\) does not depend on \(\vec r\), so the same \(c\) works for every preparation. The outcome \(B_0=+1\) leaves the Bloch vector \(\hat n_0\), and \(B_0=-1\) leaves \(-\hat n_0\). The subsequent averages of \(B_1\) are \(+c\) and \(-c\). Thus every behavior realised by one pure two-qubit state and projective measurements on both parties lies in \(\mathcal{A}_Y\).

\subsection{The plain NDWU set is not convex}
\label{app:nonconvex}

We recall the definitions from the main text. A behavior belongs to \(\mathcal{A}_B\) if there exists one number \(c_B\in[-1,1]\), which may depend on the behavior but not on \((x,a)\), such that
\[
\bigl(v_0^{a|x}\bigr)^2+\bigl(v_1^{a|x}\bigr)^2+c_B^2-2c_B v_0^{a|x}v_1^{a|x}\le 1
\]
for every \((x,a)\) with \(P(a\mid x)>0\). It belongs to \(\mathcal{A}_A\) if there exists one number \(c_A\in[-1,1]\), which may depend on the behavior but not on \((y,b)\), such that the same quadratic holds for the averages \(u_x^{b|y}\) of Alice's measurements. The set \(\mathcal{A}_Y\) is the intersection \(\mathcal{A}_A\cap\mathcal{A}_B\).

\begin{lemma}
\label{lem:det}
Every local deterministic behavior belongs to \(\mathcal{A}_Y\).
\end{lemma}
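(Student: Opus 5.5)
The plan is to check the two conditions of Definition~\ref{def:sets} directly, using the product structure of a deterministic behavior. A local deterministic behavior is given by fixed outcomes $a_x^*\in\{\pm1\}$ for Alice and $b_y^*\in\{\pm1\}$ for Bob, with $P(a,b\mid x,y)=\delta_{a,a_x^*}\delta_{b,b_y^*}$. In the parametrization \eqref{eq:ns} this means $\alpha_x=a_x^*$, $\beta_y=b_y^*$ and $E_{xy}=a_x^*b_y^*$. The behavior is clearly no-signalling, so only the two common-overlap conditions need checking.

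For $\QA_B$, the only pairs $(x,a)$ with $P(a\mid x)>0$ are $a=a_x^*$. For these, \eqref{eq:EB} gives
\[
v_y^{a_x^*|x}=\frac{b_y^*+a_x^*a_x^*b_y^*}{1+a_x^*a_x^*}=b_y^*,
\]
so Bob's conditional averages equal $(b_0^*,b_1^*)$ for both settings $x$. All the occurring preparations are therefore the same point $(v_0,v_1)=(b_0^*,b_1^*)\in\{\pm1\}^2$.

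By Lemma~\ref{lem:interval}, the admissible interval is then $I(b_0^*,b_1^*)=[b_0^*b_1^*,\,b_0^*b_1^*]$, since the square-root term vanishes. The intersection in \eqref{eq:intersectB} is taken over identical intervals and is nonempty. Choosing $c_B=b_0^*b_1^*\in\{\pm1\}$ works; one can also verify it directly in \eqref{eq:B}:
\[
1+1+1-2(b_0^*b_1^*)^2=1\le 1.
\]

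The argument for $\QA_A$ is symmetric. Using \eqref{eq:EA} with $b=b_y^*$ gives $u_x^{b_y^*|y}=a_x^*$, and we choose $c_A=a_0^*a_1^*$. Hence the behavior lies in $\QA_A\cap\QA_B=\QY$.

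There is no real obstacle here. The only point that needs care is that outcome pairs of zero probability are excluded by the definition, which is exactly what makes all the remaining preparations coincide.
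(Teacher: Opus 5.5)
Your proposal is correct and follows the same route as the paper's proof: every occurring conditional pair on Bob's side equals $(b_0^*,b_1^*)\in\{\pm1\}^2$, so the interval $I$ collapses to the single point $b_0^*b_1^*$, which serves as $c_B$. By the same argument on Alice's side, $c_A=a_0^*a_1^*$ works, and the behavior lies in $\mathcal{A}_Y$.
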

\begin{proof}
Denote, for \(\alpha_x,\beta_y\in\{+1,-1\}\), the behavior \(P_{\alpha,\beta}(a,b\mid x,y)=\delta_{a,\alpha_x}\delta_{b,\beta_y}\). Every occurring conditional pair has \(v_0=\beta_0\) and \(v_1=\beta_1\), both equal to \(\pm 1\). Then \(1-v_0^2=1-v_1^2=0\), so the square-root term in \(I(v_0,v_1)\) vanishes and \(I(\beta_0,\beta_1)=\{\beta_0\beta_1\}\). Equivalently, the quadratic becomes \((c-\beta_0\beta_1)^2\le 0\), so the only solution is \(c_B=\beta_0\beta_1\). The same calculation on Alice's side gives \(c_A=\alpha_0\alpha_1\). Therefore \(P_{\alpha,\beta}\in\mathcal{A}_Y\).
\end{proof}

\begin{lemma}
\label{lem:nonconvex}
\(\mathcal{A}_B\) is not convex. In particular, there exists an explicit local behavior that lies outside \(\mathcal{A}_B\), and hence outside \(\mathcal{A}_Y\). The same holds for \(\mathcal{A}_A\).
\end{lemma}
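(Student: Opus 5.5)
The plan is to exhibit an explicit equal-weight mixture of two local deterministic behaviors whose Bob-side overlap parameters are forced to be $c_B=+1$ and $c_B=-1$ respectively. Alice's outcome will be correlated with which term occurred, so that her two outcomes for one setting steer Bob into two preparations with disjoint intervals $I(v_0,v_1)$. By Lemma~\ref{lem:det} both endpoints lie in $\QY\subseteq\QA_B$, and the mixture is manifestly local. Hence showing that the mixture violates~\eqref{eq:intersectB} proves at once that $\QA_B$ is not convex and that a local point lies outside $\QA_B$, hence outside $\QY$.

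Concretely, I take $P_1$ deterministic with $\alpha=(+1,+1)$ and $\beta=(+1,+1)$, so $c_B=\beta_0\beta_1=+1$. I take $P_2$ deterministic with $\alpha=(-1,-1)$ and $\beta=(+1,-1)$, so $c_B=-1$. Let $P=\tfrac12(P_1+P_2)$, which is local by construction. Its parameters are
\[
\alpha_0=\alpha_1=0,\qquad \beta_0=1,\qquad \beta_1=0,\qquad E_{x0}=0,\qquad E_{x1}=1 .
\]
Here $E_{x1}=1$ because $A_xB_1$ equals $(+1)(+1)$ in $P_1$ and $(-1)(-1)$ in $P_2$. Since $P(a\mid 0)=\tfrac12>0$ for both $a$, formula~\eqref{eq:EB} gives
\[
\bigl(v_0^{+|0},v_1^{+|0}\bigr)=(1,1),\qquad \bigl(v_0^{-|0},v_1^{-|0}\bigr)=(1,-1).
\]
By Lemma~\ref{lem:interval}, the corresponding intervals are $I(1,1)=\{1\}$ and $I(1,-1)=\{-1\}$. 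These are disjoint, so the intersection~\eqref{eq:intersectB} is empty and $P\notin\QA_B$. Equivalently, the quadratic~\eqref{eq:B} reduces to $(c_B-1)^2\le0$ and $(c_B+1)^2\le0$, which cannot hold simultaneously.

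For $\QA_A$ the same construction works with the roles of the parties exchanged. Take $P_1'$ with Bob $(+1,+1)$ and Alice $(+1,+1)$, and $P_2'$ with Bob $(-1,-1)$ and Alice $(+1,-1)$. By~\eqref{eq:EA}, conditioning on $B_0=\pm1$ yields Alice-side pairs $(1,1)$ and $(1,-1)$, which again have disjoint intervals.

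There is no real obstacle here. The only points requiring care are checking that every conditioning event has positive probability, so that each constraint actually applies, and confirming that the mixture remains no-signalling and local, which is automatic for convex combinations of deterministic behaviors.
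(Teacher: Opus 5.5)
Your proposal is correct and matches the paper's proof. The paper mixes the same two deterministic behaviors, $\delta_{a,+1}\delta_{b,+1}$ and $\delta_{a,-1}\delta_{b,(-1)^y}$, with equal weights, obtains the same steered pairs $(1,1)$ and $(1,-1)$ on Bob's side, concludes from the disjointness of $I(1,1)=\{1\}$ and $I(1,-1)=\{-1\}$, and handles $\mathcal{A}_A$ by exchanging the parties.
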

\begin{proof}
Denote \(P^{(+)}(a,b\mid x,y)=\delta_{a,+1}\delta_{b,+1}\) and \(P^{(-)}(a,b\mid x,y)=\delta_{a,-1}\delta_{b,(-1)^y}\). Both are local deterministic, hence both lie in \(\mathcal{A}_Y\) by Lemma~\ref{lem:det}, with \(c_B=+1\) and \(c_B=-1\) respectively. Their convex mixture \(P=\tfrac12 P^{(+)}+\tfrac12 P^{(-)}\) is local. The nonzero probabilities are \(P(+1,+1\mid x,0)=P(-1,+1\mid x,0)=\tfrac12\) and \(P(+1,+1\mid x,1)=P(-1,-1\mid x,1)=\tfrac12\). The steered averages on Bob's side are \(v_0^{a=+1|x}=v_1^{a=+1|x}=1\) and \(v_0^{a=-1|x}=1\), \(v_1^{a=-1|x}=-1\). For \((v_0,v_1)=(1,1)\) the square-root term vanishes and \(I(1,1)=\{1\}\). For \((v_0,v_1)=(1,-1)\) it vanishes and \(I(1,-1)=\{-1\}\). These intervals are disjoint, so no single \(c_B\) exists. Thus \(P\notin\mathcal{A}_B\) and \(P\notin\mathcal{A}_Y\). Since \(P\) is a convex mixture of two behaviors in \(\mathcal{A}_B\), the set \(\mathcal{A}_B\) is not convex, and so is \(\mathcal{A}_Y\). The same construction with the parties exchanged shows that \(\mathcal{A}_A\) is not convex.
\end{proof}

\section{Proof that \(\operatorname{conv}(\mathcal{A}_A)=\operatorname{conv}(\mathcal{A}_B)=\operatorname{conv}(\mathcal{A}_Y)=\mathcal{Q}_{2222}\)}
\label{app:hierarchy}

We restate the main theorem and prove it in this Appendix.

\begin{theorem}
\label{thm:suff-app}
It holds that
\begin{equation}
\operatorname{conv}(\mathcal{A}_A)
=
\operatorname{conv}(\mathcal{A}_B)
=
\operatorname{conv}(\mathcal{A}_Y)
=
\mathcal{Q}_{2222}.
\end{equation}
Furthermore, every point of \(\mathcal{Q}_{2222}\) is a convex combination of at most eight points of \(\mathcal{Q}_{2\mathrm{Q}}^{\mathrm{sharp}}\), and therefore of \(\mathcal{A}_Y\).
\end{theorem}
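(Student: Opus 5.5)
Since $\mathcal{A}_Y \subseteq \mathcal{A}_A, \mathcal{A}_B$, it suffices to prove three things:
\[
\mathcal{Q}_{2222}\subseteq\operatorname{conv}(\mathcal{A}_Y),\qquad
\mathcal{A}_A\subseteq\mathcal{Q}_{2222},\qquad
\mathcal{A}_B\subseteq\mathcal{Q}_{2222}.
\]
Convexity of $\mathcal{Q}_{2222}$ then closes the chain of equalities.

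\textbf{First inclusion.} This is Masanes' theorem, $\operatorname{ext}(\mathcal{Q}_{2222})\subseteq\mathcal{Q}_{2\mathrm{Q}}^{\mathrm{sharp}}$, together with the observation in Appendix~\ref{app:quantum} that $\mathcal{Q}_{2\mathrm{Q}}^{\mathrm{sharp}}\subseteq\mathcal{A}_Y$. For the latter, take Bob's common overlap to be $c_B=\hat n_0\cdot\hat n_1$. Each steered state is a normalized qubit state with $v_y^{a|x}=\vec r\cdot\hat n_y$, so \eqref{eq:proj-bound} holds; the same applies on Alice's side. The count of eight points is Carathéodory's theorem in the eight-dimensional affine space of no-signalling behaviors, which gives at most $9$ points. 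To improve this to $8$, note that every point of a compact convex set in $\mathbb{R}^8$ lies on a segment joining a boundary point to an extreme point. Applying Carathéodory within a supporting hyperplane (dimension $7$) expresses the boundary point with $8$ extreme points. I would combine this with the fact that extreme points of the boundary face are extreme points of $\mathcal{Q}_{2222}$, and make the bookkeeping work by choosing the segment through an extreme point. (The standard refinement is that a point of a compact convex set of dimension $d$ is a combination of $d$ points taken from ext together with one more; I would verify the exact count carefully.)

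\textbf{Second inclusion: constructing a realization for $P\in\mathcal{A}_B$.} Fix the common $c=c_B$ and choose unit vectors $\hat n_0,\hat n_1\in\mathbb{R}^2\subset\mathbb{R}^3$ with $\hat n_0\cdot\hat n_1=c$. Set $B_y=\hat n_y\cdot\vec\sigma$. For each $(x,a)$ with $P(a|x)>0$, Lemma~\ref{lem:interval} shows that \eqref{eq:B} is exactly the condition that some vector $\vec r$ in the plane with $|\vec r|\le1$ satisfies $\vec r\cdot\hat n_y=v_y^{a|x}$. Because $\hat n_0$ and $\hat n_1$ span the plane for $|c|<1$, this $\vec r_{a|x}$ is unique, and it is a linear function of $(v_0,v_1)$. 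For $|c|=1$ the two vectors coincide up to sign, and the constraint forces $v_1=cv_0$; I would then take $\vec r$ along $\hat n_0$.

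Define the operators $\sigma_{a|x}=P(a|x)\,\tfrac12(I+\vec r_{a|x}\cdot\vec\sigma)\ge0$. Their Bloch parts, $P(a|x)\vec r_{a|x}$, depend linearly on the unnormalized quantities $(p_{a|x},q_{a|x})/2$. No-signalling gives $\sum_a p_{a|x}/2=\beta_0$ and $\sum_a q_{a|x}/2=\beta_1$, independent of $x$. Hence $\rho_B:=\sum_a\sigma_{a|x}$ is independent of $x$, so $\{\sigma_{a|x}\}$ is an assemblage with a common reduced state.

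Purify $\rho_B$ as $|\psi\rangle\in\mathbb{C}^2\otimes\mathbb{C}^2$. By the Schrödinger--HJW theorem, there are POVMs $M_{a|x}=\rho_A^{-1/2}\sigma_{a|x}^{T}\rho_A^{-1/2}$ on Alice's side (on the support) that steer exactly these operators. This yields
\[
\operatorname{Tr}\!\left[\sigma_{a|x}\,\tfrac12(I+b\,\hat n_y\cdot\vec\sigma)\right]=P(a,b|x,y).
\]
Alice's marginals are reproduced through $\operatorname{Tr}\sigma_{a|x}=P(a|x)$. Pairs $(x,a)$ with $P(a|x)=0$ are handled by $\sigma_{a|x}=0$, consistent with $p_{a|x}=q_{a|x}=0$. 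The case $\mathcal{A}_A$ is symmetric.

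\textbf{Main obstacle.} I expect the main difficulty to be the degenerate cases: $|c|=1$, where the linear reconstruction of $\vec r$ must still respect no-signalling; a rank-deficient $\rho_B$ in the HJW step, where I would restrict to the support; and the precise Carathéodory count.
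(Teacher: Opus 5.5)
Your reduction to three inclusions is sound, and the second inclusion is the paper's own construction. It uses planar Bloch vectors with $\|\vec r_{a|x}\|\le 1$ equivalent to \eqref{eq:B}, unnormalized Bloch parts linear in $(p_{a|x},q_{a|x})$ so that no-signalling makes $\rho_B$ independent of $x$, and then purification with $M_{a|x}$ the transpose of $\rho_B^{-1/2}\sigma_{a|x}\rho_B^{-1/2}$. The degenerate cases you flag are harmless. For $|c|=1$, $P(a|x)\vec r_{a|x}=\tfrac12 p_{a|x}\hat n_0$ is still linear. For rank one, $0\preceq\sigma_{a|x}\preceq\rho_B$ forces $\sigma_{a|x}=P(a|x)\rho_B$. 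The equivalence with $\|\vec r\|\le 1$ comes from $\|\vec r\|^2=(v_0^2+v_1^2-2cv_0v_1)/(1-c^2)$, not from Lemma~\ref{lem:interval}.

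\textbf{Gap in the eight-point count.} Your segment argument uses one extreme point $e$ plus a boundary point written with $8$ extreme points via Carath\'eodory in a $7$-dimensional face. That is $9$ points, i.e.\ just Carath\'eodory again, and the ``standard refinement'' you quote ($d$ points plus one more) is also $d+1$. No bookkeeping restricted to extreme points can beat $d+1$ in general: the barycentre of a $d$-simplex has unique barycentric coordinates and needs all $d+1$ vertices.

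The missing idea is to work with $\mathcal{Q}_{2\mathrm{Q}}^{\mathrm{sharp}}$ itself, which is what the statement asks for, and use its \emph{connectedness}. It is the image of the compact, path-connected parameter space $\mathbb{CP}^3\times(S^2)^4$ under the continuous Born map. Fenchel's refinement of Carath\'eodory (Fenchel--Bunt) says that if $S\subset\mathbb{R}^d$ is connected, every point of $\operatorname{conv}S$ is a convex combination of at most $d$ points of $S$. In the eight-dimensional no-signalling affine space this gives $8$, which is exactly how the paper concludes.

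\textbf{Smaller point on the first inclusion.} Deriving $\mathcal{Q}_{2222}\subseteq\operatorname{conv}(\mathcal{Q}_{2\mathrm{Q}}^{\mathrm{sharp}})$ from Masanes' theorem tacitly needs $\mathcal{Q}_{2222}=\operatorname{conv}(\operatorname{ext}\mathcal{Q}_{2222})$, i.e.\ compactness. Compactness is not automatic for finite-dimensional tensor-product quantum sets. The paper avoids this by proving $\mathcal{Q}_{2222}=\operatorname{conv}(\mathcal{Q}_{2\mathrm{Q}}^{\mathrm{sharp}})$ directly, via Naimark dilation, Jordan's lemma on each side, and pure-state decomposition of the block states. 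That argument needs no extreme points, and together with compactness of $\mathcal{Q}_{2\mathrm{Q}}^{\mathrm{sharp}}$ it yields compactness of $\mathcal{Q}_{2222}$. You should either cite Masanes in that Jordan-reduction form or include the argument.
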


\begin{proof}
The argument has two directions. Throughout, \(\vec\sigma=(\sigma_x,\sigma_y,\sigma_z)\) denotes the three Pauli matrices.

Consider, without loss of generality, a behavior \(\{P(a,b\mid x,y)\}\in\mathcal{A}_B\), labelled by the eight parameters \((\alpha_x)_{x\in\{0,1\}}\), \((\beta_y)_{y\in\{0,1\}}\) and \((E_{xy})_{x,y\in\{0,1\}}\), the local marginals and the correlators. By Definition~\ref{def:sets}, membership in $\mathcal{A}_B$ implies that there is a single number \(c_B\in[-1,1]\), which may depend on \(P\) but not on \((x,a)\), such that
\begin{equation}
\bigl(v_0^{a|x}\bigr)^2+\bigl(v_1^{a|x}\bigr)^2+c_B^2-2c_B v_0^{a|x}v_1^{a|x}\le 1
\label{eq:B-app}
\end{equation}
for every \((x,a)\) with \(P(a\mid x)>0\). Here 
\begin{equation}
v_y^{a|x}=(\beta_y+a E_{xy})/(1+a\alpha_x).
\label{eq:vyax-def}
\end{equation}
We first show that this implies that $\{P(a,b\mid x, y)\} \in \Qset$.
There are two cases.

Case (i): \(|c_B|<1\). Denote \(s_B=\sqrt{1-c_B^2}\). Let us take the two measurement directions for Bob as the unit vectors
\begin{equation}
\hat n_0=(0,0,1),
\qquad
\hat n_1=(s_B,0,c_B),
\end{equation}
and the corresponding observables as \(B_y=\hat n_y\cdot\vec\sigma\). These are unit vectors since \(s_B^2+c_B^2=1\), and we have that \(\hat n_0\cdot\hat n_1=c_B\). For every \((x,a)\) with \(P(a\mid x)>0\), let us denote
\begin{equation}
\vec r_{a|x}
:=
\left(
\frac{v_1^{a|x}-c_B v_0^{a|x}}{s_B},\;
0,\;
v_0^{a|x}
\right),
\label{eq:r-app}
\end{equation}
and set \(\vec r_{a|x}=0\) if \(P(a\mid x)=0\). For every $(x,a)$ with $P(a \mid x) > 0$, the two inner products are \(\vec r_{a|x}\cdot\hat n_0=v_0^{a|x}\) and
$\vec r_{a|x}\cdot\hat n_1
=
s_B\cdot\frac{v_1^{a|x}-c_B v_0^{a|x}}{s_B}+c_B v_0^{a|x}
=
v_1^{a|x}$.
The squared length is
\begin{equation}
\|\vec r_{a|x}\|^2
=
\frac{(v_1^{a|x}-c_B v_0^{a|x})^2}{1-c_B^2}+\bigl(v_0^{a|x}\bigr)^2
=
\frac{(v_0^{a|x})^2+(v_1^{a|x})^2-2c_B v_0^{a|x}v_1^{a|x}}{1-c_B^2}.
\end{equation}
Thus \(\|\vec r_{a|x}\|\le 1\) if and only if \eqref{eq:B-app} holds. The eigenvalues of \(\vec r\cdot\vec\sigma\) are \(\pm\|\vec r\|\), so the eigenvalues of \((I+\vec r\cdot\vec\sigma)/2\) are \((1\pm\|\vec r\|)/2\), notably both are nonnegative when \(\|\vec r\|\le 1\). Let us denote
\begin{equation}
\tau_{a|x}
:=
P(a\mid x)\,\frac{I+\vec r_{a|x}\cdot\vec\sigma}{2}
\label{Eq:tauax-def}
\end{equation}
when \(P(a\mid x)>0\), and set \(\tau_{a|x}=0\) when \(P(a\mid x)=0\). Then by the above non-negativity of eigenvalues we have that \(\tau_{a|x}\succeq 0\).
We now compute the trace against Bob's observable. For \(P(a\mid x)>0\) we have from the above considerations that
\begin{equation}
\operatorname{Tr}(\tau_{a|x}B_y)
=
P(a\mid x)\,
\operatorname{Tr}\left[
\frac{I+\vec r_{a|x}\cdot\vec\sigma}{2}
\,(\hat n_y\cdot\vec\sigma)
\right] =
P(a\mid x)\,\vec r_{a|x}\cdot\hat n_y
=
P(a\mid x)\,v_y^{a|x}.
\label{eq:tr-tau-by}
\end{equation}
If \(P(a\mid x)=0\), then \(\tau_{a|x}=0\) and both sides are zero.

We now compute sum of $\tau_{a|x}$ over Alice's outcome and check that this is independent of her setting. Indeed we have
\begin{equation}
\sum_a\tau_{a|x}
=
\frac12 I
+
\frac12\Bigl(\sum_a P(a\mid x)\,\vec r_{a|x}\Bigr)\cdot\vec\sigma,
\end{equation}
from \(\sum_a P(a\mid x)=1\). The third component of the vector $\Bigl(\sum_a P(a\mid x)\,\vec r_{a|x}\Bigr)$ is \(\sum_a P(a\mid x)\,v_0^{a|x}\). From \(P(a\mid x)=(1+a\alpha_x)/2\) and the fact that \(v_0^{a|x} =  (\beta_0+a E_{x0})/(1+a\alpha_x)\) from \eqref{eq:vyax-def}, we have
\begin{equation}
P(a\mid x)\,v_0^{a|x}
=
\frac{\beta_0+a E_{x0}}{2} 
\label{eq:pax-v0ax}
\end{equation}
so that summing over \(a=+1\) and \(a=-1\) gives \(\beta_0\). The same calculation with \(y=1\) gives \(\beta_1\), so that the first component is \((\beta_1-c_B\beta_0)/s_B\). Now neither \(\beta_0\) nor \(\beta_1\) depends on \(x\), because each is Bob's marginal and the behavior $\{P(a,b|x,y)\}$ is no-signalling. Hence
\begin{equation}
\rho_B
:=
\sum_a\tau_{a|x}
=
\frac12\left[
I
+
\frac{\beta_1-c_B\beta_0}{\sqrt{1-c_B^2}}\,\sigma_x
+
\beta_0\,\sigma_z
\right]
\label{eq:rhoB-app}
\end{equation}
is the same operator for both \(x=0\) and \(x=1\).

Case (ii): \(c_B=\pm 1\). In this case, inequality \eqref{eq:B-app} becomes \((v_0^{a|x}-c_B v_1^{a|x})^2\le 0\), so \(v_1^{a|x}=c_B v_0^{a|x}\) whenever \(P(a\mid x)>0\). Set \(B_0=\sigma_z\) and \(B_1=c_B\sigma_z\). For \(P(a\mid x)>0\) set
\begin{equation}
\tau_{a|x}
=
\frac12\left[
P(a\mid x)\,I+P(a\mid x)\,v_0^{a|x}\,\sigma_z
\right],
\end{equation}
and set \(\tau_{a|x}=0\) if \(P(a\mid x)=0\). The same summation as above gives
\begin{equation}
\rho_B
:=
\sum_a\tau_{a|x}
=
\frac12\left[I+\beta_0\sigma_z\right],
\end{equation}
again independent of \(x\). In both cases \(0\le\tau_{+1|x}\preceq\rho_B\) and \(0\le\tau_{-1|x}\preceq\rho_B\), since these two operators are positive semidefinite and sum to \(\rho_B\). At this point we have the operators \(\{\tau_{a|x}\}\) on Bob's qubit, each is positive semidefinite, and both settings of Alice give the same sum \(\rho_B\). 

We now define the joint state \(|\Psi\rangle\) of the two parties and Alice's measurement operators \(M_{a|x}\). The operator \(\rho_B\) is a \(2\times 2\) trace one positive semidefinite matrix, denote its eigenvalues  as \(\lambda_k\) 
and its rank by \(r \in \{1,2\}\).
Let \(\{|k\rangle\}_{k=1}^{r}\) be a corresponding orthonormal basis of eigenvectors, i.e., \(\rho_B|k\rangle=\lambda_k|k\rangle\) with \(\lambda_k>0\). Let us construct
\begin{equation}
|\Psi\rangle
:=
\sum_{k=1}^{r}\sqrt{\lambda_k}\,|k\rangle_A\otimes|k\rangle_B.
\label{eq:Psi-def}
\end{equation}
On \(\operatorname{span}\{|1\rangle,\ldots,|r\rangle\}\), we can now define Alice's measurement operators by
\begin{equation}
M_{a|x}
=
\sum_{k,\ell=1}^{r}
\frac{
\langle\ell|\tau_{a|x}|k\rangle
}{
\sqrt{\lambda_k}\,\sqrt{\lambda_\ell}
}
\,|k\rangle\langle\ell|.
\label{eq:M-app}
\end{equation}
In the eigenbasis of \(\rho_B\), this is the transpose of \(\rho_B^{-1/2}\tau_{a|x}\rho_B^{-1/2}\). Since \(\tau_{a|x}\succeq 0\), the matrix of \(\rho_B^{-1/2}\tau_{a|x}\rho_B^{-1/2}\) in that basis is positive semidefinite, and the transpose of a positive semidefinite matrix has the same eigenvalues, so that \(M_{a|x}\succeq 0\). And to see that these operators sum to the identity on the support, it suffices to substitute \(\sum_a\tau_{a|x}=\rho_B\) into \eqref{eq:M-app}, the only nonzero matrix elements of \(\rho_B\) in this basis are \(\langle\ell|\rho_B|k\rangle=\lambda_k\delta_{k\ell}\), so that
\begin{equation}
\sum_a M_{a|x}
=
\sum_{k=1}^{r}|k\rangle\langle k|.
\label{eq:B16}
\end{equation}
If \(r=2\), then \(\rho_B\) is invertible on \(\mathbb C^2\) and \(\sum_a M_{a|x}=I\). If \(r=1\), then \(M_{+1|x}\) and \(M_{-1|x}\) are the two numbers \(P(+1\mid x)\) and \(P(-1\mid x)\), which sum to one. Embedding in \(\mathbb C^2\) gives a realization on \(\mathbb C^2\otimes\mathbb C^2\) without changing the probabilities.

With $M_{a|x}$ as in \eqref{eq:M-app} and \(N_{b|y}=(I+b B_y)/2\), the joint probability is the expectation in \(|\Psi\rangle\) which we compute now. From \eqref{eq:Psi-def}, we first obtain that
$\langle\Psi|(M_{a|x}\otimes N_{b|y})|\Psi\rangle
=
\sum_{i,j=1}^{r}
\sqrt{\lambda_i}\,\sqrt{\lambda_j}\,
\langle i|M_{a|x}|j\rangle\,
\langle i|N_{b|y}|j\rangle$.
From the definition of \(M_{a|x}\) in \eqref{eq:M-app}, the corresponding matrix element is
$\langle i|M_{a|x}|j\rangle
=
\frac{\langle j|\tau_{a|x}|i\rangle}{\sqrt{\lambda_i}\,\sqrt{\lambda_j}}$,
and we obtain
\begin{equation}
\langle\Psi|(M_{a|x}\otimes N_{b|y})|\Psi\rangle
=
\sum_{i,j=1}^{r}
\langle j|\tau_{a|x}|i\rangle\,
\langle i|N_{b|y}|j\rangle
=
\operatorname{Tr}(\tau_{a|x} N_{b|y}).
\end{equation}

Since \(N_{b|y}=(I+b B_y)/2\) and \(\operatorname{Tr}(\tau_{a|x})=P(a\mid x)\) from \eqref{Eq:tauax-def}, while \(\operatorname{Tr}(\tau_{a|x} B_y)=P(a\mid x)\,v_y^{a|x}\) from \eqref{eq:tr-tau-by}, we have that
\begin{equation}
\operatorname{Tr}(\tau_{a|x} N_{b|y})
=
\frac12\operatorname{Tr}(\tau_{a|x})
+
\frac{b}{2}\operatorname{Tr}(\tau_{a|x} B_y)
=
P(a\mid x)\frac{1+b\,v_y^{a|x}}{2}.
\end{equation}
For \(P(a\mid x)>0\), the definition of the conditional average is \(v_y^{a|x}=(\beta_y+a E_{xy})/(1+a\alpha_x)\) from \eqref{eq:vyax-def}, and \(P(a\mid x)=(1+a\alpha_x)/2\) by definition of $\alpha_x = \mathbb{E}[A_x]$, so that
\begin{equation}
P(a\mid x)\,v_y^{a|x}
=
\frac{\beta_y+a E_{xy}}{2}.
\end{equation}
Therefore, we finally have that 
\begin{align}
\langle\Psi|(M_{a|x}\otimes N_{b|y})|\Psi\rangle
&=
\operatorname{Tr}(\tau_{a|x} N_{b|y})
=
P(a\mid x)\frac{1+b\,v_y^{a|x}}{2} \nonumber \\
&=
\frac12\,P(a\mid x)
+
\frac{b}{2}\,P(a\mid x)\,v_y^{a|x}
=
\frac{1+a\alpha_x}{4}
+
\frac{b(\beta_y+a E_{xy})}{4}
\nonumber\\
&=
\frac14\bigl(1+a\alpha_x+b\beta_y+ab E_{xy}\bigr)
=
P(a,b\mid x,y).
\end{align}
If \(P(a\mid x)=0\), then \(\tau_{a|x}=0\), so the trace is zero, and the joint probability is also zero. Thus \(P\in\mathcal{Q}_{2222}\).



The same construction with the parties exchanged gives \(\mathcal{A}_A\subseteq\mathcal{Q}_{2222}\), and therefore \(\mathcal{A}_Y\subseteq\mathcal{Q}_{2222}\). The set \(\mathcal{Q}_{2222}\) is convex. If \(P=\sum_\lambda q_\lambda P_\lambda\) with \(q_\lambda\ge 0\), \(\sum_\lambda q_\lambda=1\) and
\begin{equation}
P_\lambda(a,b\mid x,y)
=
\operatorname{Tr}\bigl[\rho_\lambda(M_{a|x}^{(\lambda)}\otimes N_{b|y}^{(\lambda)})\bigr],
\end{equation}
the block-diagonal state \(\bigoplus_\lambda q_\lambda\rho_\lambda\) and the block-diagonal measurement operators realise \(P\). Hence
\begin{equation}
\operatorname{conv}(\mathcal{A}_Y)
\subseteq
\operatorname{conv}(\mathcal{A}_A)
\subseteq
\mathcal{Q}_{2222},
\label{eq:incls-1}
\end{equation}
and likewise with \(\mathcal{A}_B\) in place of \(\mathcal{A}_A\).

For the opposite inclusion, start from a finite-dimensional quantum behavior in \(\mathcal{Q}_{2222}\),
\begin{equation}
P(a,b\mid x,y)
=
\operatorname{Tr}\bigl[\rho_{AB}(M_{a|x}\otimes N_{b|y})\bigr],
\end{equation}
with \(\rho_{AB}\) a state on \(\mathcal H_A\otimes\mathcal H_B\) and \(\{M_{a|x}\}\), \(\{N_{b|y}\}\) binary POVMs. For Bob's setting \(y\), the effect \(N_{+1|y}\) satisfies \(0\le N_{+1|y}\le I\). Naimark dilation gives an ancilla space, a vector \(|0\rangle\), and a projector \(\Pi_{+1|y}\) on the enlarged space such that
\begin{equation}
\operatorname{Tr}\bigl[\rho_{AB}(M_{a|x}\otimes N_{b|y})\bigr]
=
\operatorname{Tr}\bigl[
(\rho_{AB}\otimes|0\rangle\langle 0|)
(M_{a|x}\otimes \Pi_{b|y})
\bigr],
\end{equation}
where \(\Pi_{-1|y}=I-\Pi_{+1|y}\). The same dilation is applied to Alice's two POVMs. After these dilations, \(M_{+1|x}\) and \(N_{+1|y}\) are projectors, and the enlarged state still reproduces \(P\). We take \(M_{+1|x}\) and \(N_{+1|y}\) to be projectors from hereon, and write \(\Pi_{+1|y}\) for Bob's projectors and \(\Pi_{+1|x}\) for Alice's.

Jordan's Lemma states that if \(P\) and \(Q\) are two orthogonal projections on a finite-dimensional Hilbert space, then that space is an orthogonal direct sum of subspaces of dimension one or two, each invariant under \(P\) and under \(Q\)~\cite{Jordan1875}. We simply apply this to \(\{\Pi_{+1|0},\Pi_{+1|1}\}\) on each side. Then there are orthonormal decompositions
$\mathcal H_A=\bigoplus_{\mu}\mathcal H_{A,\mu},
\qquad
\mathcal H_B=\bigoplus_{\nu}\mathcal H_{B,\nu}$,
with \(\dim\mathcal H_{A,\mu}\le 2\) and \(\dim\mathcal H_{B,\nu}\le 2\), such that each local projection leaves every block invariant. The product decomposition
$\mathcal H_A\otimes\mathcal H_B
=
\bigoplus_{\mu,\nu}\mathcal H_{A,\mu}\otimes\mathcal H_{B,\nu}$
therefore reduces both parties' measurements. Denote by \(\Pi_{\mu\nu}\) the projection onto \(\mathcal H_{A,\mu}\otimes\mathcal H_{B,\nu}\), and set \(q_{\mu\nu}=\operatorname{Tr}(\rho_{AB}\Pi_{\mu\nu})\). On every block with \(q_{\mu\nu}>0\), denote
$\rho_{\mu\nu}
=
q_{\mu\nu}^{-1}\Pi_{\mu\nu}\rho_{AB}\Pi_{\mu\nu}$.
Let \(P_{\mu\nu}(a,b\mid x,y)\) be the behavior obtained by restricting the four projectors to this block and evaluating them in \(\rho_{\mu\nu}\). Since the projectors leave the blocks invariant, the cross terms across different blocks do not contribute, and we have that
\begin{equation}
P(a,b\mid x,y)
=
\sum_{\mu,\nu}q_{\mu\nu}P_{\mu\nu}(a,b\mid x,y).
\end{equation}
A one-dimensional block is spanned by a single vector, and the restricted measurement returns one of the two values \(+1\) or \(-1\) with probability one. We may embed that in \(\mathbb C^2\) by sending the basis vector to \(|0\rangle\), and represent the deterministic observable by \(\sigma_z\) or by \(-\sigma_z\), as per the value it returns. Every two-dimensional irreducible block has two rank-one outcome projectors for each setting, so its observables have the form $\hat{n} \cdot \vec{\sigma}$, reducible blocks are split into one-dimensional blocks. 

Decompose each block state into pure states, \(\rho_{\mu\nu}=\sum_s p_s|\psi_s\rangle\langle\psi_s|\), with \(p_s\ge 0\) and \(\sum_s p_s=1\). The Born rule is linear in the state, so that we obtain
\begin{equation}
P_{\mu\nu}(a,b\mid x,y)
=
\sum_s p_s\,
\langle\psi_s|
\frac{I+a A_x}{2}
\otimes
\frac{I+b B_y}{2}
|\psi_s\rangle.
\end{equation}
Thus the behavior of \(\rho_{\mu\nu}\) is the same convex combination of the behaviors of the pure states \(|\psi_s\rangle\). Each summand is realised by one pure state in \(\mathbb C^2\otimes\mathbb C^2\) and one projective observable per setting. Recall that we denote by \(\mathcal{Q}_{2\mathrm{Q}}^{\mathrm{sharp}}\) the set of all behaviors obtained in this way: there exist a pure state \(|\psi\rangle\in\mathbb C^2\otimes\mathbb C^2\) and unit vectors \(\hat m_x,\hat n_y\in\mathbb R^3\) such that
\begin{equation}
P(a,b\mid x,y)
=
\bigl\langle\psi\bigr|
\frac{I+a\hat m_x\cdot\vec\sigma}{2}
\otimes
\frac{I+b\hat n_y\cdot\vec\sigma}{2}
\bigl|\psi\bigr\rangle.
\end{equation}
Therefore from the decomposition above we see that
\begin{equation}
\mathcal{Q}_{2222}
=
\operatorname{conv}\bigl(\mathcal{Q}_{2\mathrm{Q}}^{\mathrm{sharp}}\bigr).
\end{equation}
This decomposition is consistent with Masanes' extremal characterization~\cite{Masanes2005}.

On any such realization, as we have seen before, the overlap \(c_B=\hat n_0\cdot\hat n_1\) is a property of the two observables and not of the conditional state. The conditional state of Bob after the outcome \(A_x=a\) is a qubit state, so that its Bloch vector has length at most one. Its two averages are the inner products of that Bloch vector with \(\hat n_0\) and \(\hat n_1\), and the calculation of case (i) converts \(\|\vec r\|\le 1\) into \eqref{eq:B-app} at this \(c_B\). The same holds for Alice with \(c_A=\hat m_0\cdot\hat m_1\), so that we obtain \(\mathcal{Q}_{2\mathrm{Q}}^{\mathrm{sharp}}\subseteq\mathcal{A}_Y\). We thus have
\begin{equation}
\mathcal{Q}_{2222}
=
\operatorname{conv}\bigl(\mathcal{Q}_{2\mathrm{Q}}^{\mathrm{sharp}}\bigr)
\subseteq
\operatorname{conv}(\mathcal{A}_Y)
\subseteq
\mathcal{Q}_{2222}.
\label{eq:incls-2}
\end{equation}
Combining \eqref{eq:incls-1} and \eqref{eq:incls-2} gives the statement of equalities in the theorem.

Finally, it remains to count the number of summands. A pure state of \(\mathbb C^2\otimes\mathbb C^2\cong\mathbb C^4\) is a line through the origin in \(\mathbb C^4\). The set of such lines is the complex projective space \(\mathbb{CP}^3\). Each projective observable with fixed outcome labels is a point of the unit sphere \(S^2\subset\mathbb R^3\). The parameter space \(\mathbb{CP}^3\times(S^2)^4\) is a product of compact sets, hence compact. It is path-connected: any two points can be joined by a continuous path lying entirely in the space (a pure state can be rotated continuously into any other pure state, and each unit vector on \(S^2\) can be rotated continuously into any other unit vector). The Born rule is the continuous map from this space to the sixteen probabilities. The continuous image of a compact set is compact, and the continuous image of a path-connected set is path-connected. Thus \(\mathcal{Q}_{2\mathrm{Q}}^{\mathrm{sharp}}\) is compact and path-connected. The no-signalling behaviors in this scenario form an eight-dimensional affine set. Carath\'eodory's theorem says that a point in the convex hull of a set in an eight-dimensional affine space is a convex combination of at most nine points of that set~\cite{Rockafellar1970}. Fenchel's refinement of Carath\'eodory's theorem says that if the set is path-connected, one fewer point suffices, namely eight~\cite{Fenchel1929,Rockafellar1970}. Therefore every point of \(\mathcal{Q}_{2222}\) is a convex combination of at most eight points of \(\mathcal{Q}_{2\mathrm{Q}}^{\mathrm{sharp}}\), and hence of \(\mathcal{A}_Y\).
\end{proof}

\section{The pure two-qubit set \(\mathcal{Q}_{2\mathrm{Q}}^{\mathrm{sharp}}\) is strictly smaller than \(\mathcal{A}_Y\)}
\label{app:sharp}

Lemma~\ref{lem:nonconvex} already shows that a local behavior can lie outside \(\mathcal{A}_Y\), so that the convex hull in Theorem~\ref{thm:suff-app} cannot be omitted. In this Appendix, we consider the set \(\mathcal{Q}_{2\mathrm{Q}}^{\mathrm{sharp}}\) consisting of those behaviors that are realized by some pure state of \(\mathbb C^2\otimes\mathbb C^2\) and projective measurements on both parties, and show that this set is strictly contained in \(\mathcal{A}_Y\), thereby elucidating the connection between the sets investigated in this paper.

Recall that we denote by \(\mathcal{Q}_{2\mathrm{Q}}^{\mathrm{sharp}}\) the set of \(2222\) behaviors \(P\) for which there exist a pure state \(|\psi\rangle\in\mathbb C^2\otimes\mathbb C^2\) and unit vectors \(\hat m_x,\hat n_y\in\mathbb R^3\) such that
\begin{equation}
P(a,b\mid x,y)
=
\bigl\langle\psi\bigr|
\frac{I+a\,\hat m_x\cdot\vec\sigma}{2}
\otimes
\frac{I+b\,\hat n_y\cdot\vec\sigma}{2}
\bigl|\psi\bigr\rangle.
\end{equation}
Masanes' theorem~\cite{Masanes2005} states that every extreme point of \(\mathcal{Q}_{2222}\) belongs to this set.

We first observe that this set is a subset of $\QY$. 
\begin{lemma}
\label{lem:sharp-in}
\(\mathcal{Q}_{2\mathrm{Q}}^{\mathrm{sharp}}\subseteq\mathcal{A}_Y\).
\end{lemma}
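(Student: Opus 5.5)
Take $P\in\mathcal{Q}_{2\mathrm{Q}}^{\mathrm{sharp}}$, realized by a pure state $|\psi\rangle\in\mathbb C^2\otimes\mathbb C^2$ and unit vectors $\hat m_x,\hat n_y$. I will exhibit the explicit overlap parameters $c_B:=\hat n_0\cdot\hat n_1$ and $c_A:=\hat m_0\cdot\hat m_1$, which lie in $[-1,1]$ by Cauchy--Schwarz and depend only on the measurements. I then verify \eqref{eq:B} for every $(x,a)$ with $P(a\mid x)>0$; the argument for \eqref{eq:A} is identical with the parties exchanged. Membership in both $\mathcal{A}_B$ and $\mathcal{A}_A$ gives $P\in\mathcal{A}_Y$.

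\textbf{Step 1: identify the steered Bloch vectors.} For $(x,a)$ with $P(a\mid x)>0$, define Bob's conditional state
\[
\rho_{B}^{a|x}=\operatorname{Tr}_A\bigl[(\tfrac{I+a\hat m_x\cdot\vec\sigma}{2}\otimes I)|\psi\rangle\langle\psi|\bigr]/P(a\mid x).
\]
This is a positive trace-one qubit operator, so it can be written as $(I+\vec r_{a|x}\cdot\vec\sigma)/2$ with $\|\vec r_{a|x}\|\le 1$. Using the Born rule together with the no-signalling form \eqref{eq:ns}, I compute
\[
\operatorname{Tr}\bigl[\rho_B^{a|x}\,\hat n_y\cdot\vec\sigma\bigr]=\sum_b b\,P(a,b\mid x,y)/P(a\mid x),
\]
and by \eqref{eq:cond-app} this equals $v_y^{a|x}$. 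Hence $v_y^{a|x}=\vec r_{a|x}\cdot\hat n_y$ for $y=0,1$.

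\textbf{Step 2: convert the Bloch bound into the quadratic.} This is exactly the projection argument of Appendix~\ref{app:quantum}, which I would invoke or redo.

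If $|c_B|<1$, rotate coordinates so that $\hat n_0=(0,0,1)$ and $\hat n_1=(s_B,0,c_B)$ with $s_B=\sqrt{1-c_B^2}$. Then the $z$-component of $\vec r_{a|x}$ is $v_0^{a|x}$ and the $x$-component is $(v_1^{a|x}-c_Bv_0^{a|x})/s_B$. The squared length computation from case (i) of the proof of Theorem~\ref{thm:suff-app} then gives
\[
\frac{(v_0^{a|x})^2+(v_1^{a|x})^2-2c_Bv_0^{a|x}v_1^{a|x}}{1-c_B^2}\le\|\vec r_{a|x}\|^2\le1,
\]
which is \eqref{eq:B-app}.

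If $c_B=\pm1$, then $\hat n_1=c_B\hat n_0$, so $v_1^{a|x}=c_Bv_0^{a|x}$. In that case the quadratic reduces to $(v_1^{a|x}-c_Bv_0^{a|x})^2\le0$, which holds.

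\textbf{Main obstacle.} The only delicate point is Step 1: checking that the conditional averages \eqref{eq:EB} defined from the behavior coincide with the Bloch-vector inner products. This requires using Bob's marginal and the partial trace correctly, and treating pairs with $P(a\mid x)=0$ as excluded, as the definition allows. The crucial observation is that the same $c_B$ works for all four preparations, because it is fixed by the observables; that is what makes the common-overlap requirement automatic here.
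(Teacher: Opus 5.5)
Your proposal is correct and follows essentially the same route as the paper: fix $c_B=\hat n_0\cdot\hat n_1$ and $c_A=\hat m_0\cdot\hat m_1$, identify each conditional average $v_y^{a|x}$ as the inner product of the steered qubit Bloch vector with $\hat n_y$, and turn $\|\vec r_{a|x}\|\le 1$ into the NDWU quadratic via the case (i) length computation. You also fill in two details the paper leaves implicit: the out-of-plane Bloch component, handled through your inequality ``$\le\|\vec r_{a|x}\|^2$'', and the degenerate case $c_B=\pm1$, where the case (i) computation divides by $s_B=0$.
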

\begin{proof}
Consider any behavior \(\{P(a,b|x,y)\}\in\mathcal{Q}_{2\mathrm{Q}}^{\mathrm{sharp}}\), realized by a pure state \(|\psi\rangle\) and unit vectors \(\hat m_x,\hat n_y\). Let \(c_B:=\hat n_0\cdot\hat n_1\). This number is fixed by the two measurements of Bob and does not depend on Alice's setting or outcome $(x,a)$. For \(P(a\mid x)>0\), the conditional state of Bob is
\[
\rho_{a|x}
=
\frac{\operatorname{Tr}_A\bigl[((I+a\hat m_x\cdot\vec\sigma)/2\otimes I)\,|\psi\rangle\langle\psi|\bigr]}{P(a\mid x)},
\]
which is a qubit state, so its Bloch vector \(\vec r_{a|x}\) has length at most one. The conditional averages are the inner products \(v_y^{a|x}=\vec r_{a|x}\cdot\hat n_y\). The length calculation of case (i) in Appendix~\ref{app:hierarchy} converts \(\|\vec r_{a|x}\|\le 1\) into the NDWU quadratic given as
\begin{equation}
\bigl(v_0^{a|x}\bigr)^2+\bigl(v_1^{a|x}\bigr)^2+c_B^2-2c_B v_0^{a|x}v_1^{a|x}\le 1
\label{eq:C-app-1}
\end{equation}
for this \(c_B\). The same argument with the parties exchanged produces \(c_A=\hat m_0\cdot\hat m_1\). Thus \(\{P(a,b|x,y)\}\in\mathcal{A}_Y\).
\end{proof}

\begin{lemma}
\label{lem:AY-strict}
\(\mathcal{A}_Y\subsetneq\mathcal{Q}_{2222}\).
\end{lemma}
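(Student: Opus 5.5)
Two things are needed: the inclusion \(\mathcal{A}_Y\subseteq\mathcal{Q}_{2222}\), and strictness. The inclusion is already available. The proof of Theorem~\ref{thm:suff-app} shows \(\mathcal{A}_B\subseteq\mathcal{Q}_{2222}\) by building an explicit qubit realization (the operators \(\tau_{a|x}\), the purification \(|\Psi\rangle\), and Alice's \(M_{a|x}\)). Since \(\mathcal{A}_Y=\mathcal{A}_A\cap\mathcal{A}_B\subseteq\mathcal{A}_B\), the inclusion follows at once.

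For strictness I will exhibit a local behavior that fails the NDWU criterion. Local behaviors are quantum, since \(\mathcal{L}\subseteq\mathcal{Q}_{2222}\). The witness is the mixture from Lemma~\ref{lem:nonconvex},
\[
P=\tfrac12 P^{(+)}+\tfrac12 P^{(-)},
\]
which that lemma already shows lies outside \(\mathcal{A}_B\), and hence outside \(\mathcal{A}_Y\).

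To make the argument self-contained, I will briefly recheck that \(P\in\mathcal{Q}_{2222}\). The mixture is realized by a classically correlated two-qubit state,
\[
\tfrac12\,|00\rangle\langle00|+\tfrac12\,|11\rangle\langle11|,
\]
together with diagonal measurement operators. These reproduce \(P^{(+)}\) on \(|00\rangle\) and \(P^{(-)}\) on \(|11\rangle\). Alternatively, one can invoke convexity of \(\mathcal{Q}_{2222}\) together with Lemma~\ref{lem:det} and the inclusion just established.

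I will also recall why \(P\notin\mathcal{A}_B\). Conditioning on \(a=+1\) forces \(c_B=1\), via \(I(1,1)=\{1\}\). Conditioning on \(a=-1\) forces \(c_B=-1\), via \(I(1,-1)=\{-1\}\). So no common overlap parameter exists.

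There is no real obstacle here. The only care needed is to note that the witness must lie in \(\mathcal{Q}_{2222}\setminus\mathcal{A}_Y\), not merely in \(\mathcal{Q}_{2222}\setminus\mathcal{Q}_{2\mathrm{Q}}^{\mathrm{sharp}}\). Its failure is on Bob's side, so it is already excluded from \(\mathcal{A}_B\), which suffices.
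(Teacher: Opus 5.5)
Your proposal is correct and follows the paper's proof: the inclusion comes from the construction in Theorem~\ref{thm:suff-app} (via \(\mathcal{A}_Y\subseteq\mathcal{A}_B\)), and strictness comes from the local mixture of Lemma~\ref{lem:nonconvex}, which is quantum but admits no common \(c_B\). Your explicit classically correlated two-qubit realization is a correct extra check that the paper leaves implicit by appealing to \(\mathcal{L}\subseteq\mathcal{Q}_{2222}\).
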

\begin{proof}
Theorem~\ref{thm:suff-app} gives \(\mathcal{A}_Y\subseteq\mathcal{Q}_{2222}\). The local mixture of Lemma~\ref{lem:nonconvex} lies in \(\mathcal{Q}_{2222}\) and not in \(\mathcal{A}_B\), hence not in \(\mathcal{A}_Y\).
\end{proof}

We now show that \(\mathcal{Q}_{2\mathrm{Q}}^{\mathrm{sharp}}\) is a strict subset of $\QY$ by exhibiting points in the latter set not in the former. 
\begin{lemma}
\label{lem:Pr}
Consider an arbitrary number \(r\in(2/3,1/\sqrt{2})\). Let \(\{P^{(r)}(a,b\mid x,y)\}\) be the behavior with \(x,y\in\{0,1\}\) and \(a,b\in\{+1,-1\}\) determined by \(\mathbb E[A_0]=\mathbb E[B_0]=-r/\sqrt{2}\), \(\mathbb E[A_1]=\mathbb E[B_1]=0\), \(\mathbb E[A_1 B_1]=-r\), and \(\mathbb E[A_0 B_0]=\mathbb E[A_0 B_1]=\mathbb E[A_1 B_0]=0\). Then \(\{P^{(r)}(a,b\mid x,y)\} \in \mathcal{L}(2,2;2,2) \cap \mathcal{A}_Y\), but \(\{P^{(r)}(a,b\mid x,y)\}\notin\mathcal{Q}_{2\mathrm{Q}}^{\mathrm{sharp}}\).
\end{lemma}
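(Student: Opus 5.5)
The proof will handle the three claims separately: locality, membership in \(\mathcal{A}_Y\), and exclusion from \(\mathcal{Q}_{2\mathrm{Q}}^{\mathrm{sharp}}\). Throughout write \(t:=r/\sqrt{2}\), so \(\alpha_0=\beta_0=-t\), \(\alpha_1=\beta_1=0\), \(E_{11}=-r\) and the other correlators vanish. The behavior is symmetric under exchange of the parties, so only one side needs checking in the second part.

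\emph{Locality.} First check nonnegativity of \(\frac14(1+a\alpha_x+b\beta_y+abE_{xy})\). The worst entries are \(1-2t=1-\sqrt2 r>0\) at \(x=y=0\) (this uses \(r<1/\sqrt2\)) and \(1-r>0\) at \(x=y=1\). Next, every CHSH expression \(|E_{00}+E_{01}+E_{10}+E_{11}-2E_{x'y'}|\) here equals \(r<2\). By Fine's theorem, a no-signalling \((2,2;2,2)\) behavior obeying all eight CHSH inequalities is local, so \(P^{(r)}\in\mathcal{L}\).

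\emph{Membership in \(\mathcal{A}_Y\).} Compute the steered averages from \eqref{eq:vyax-def}. For \(x=0\) they are \(v_0^{a|0}=-t/(1-at)\) and \(v_1^{a|0}=0\). For \(x=1\) they are \(v_0^{a|1}=-t\) and \(v_1^{a|1}=-ar\). Take \(c_B=0\), so that \eqref{eq:B} reduces to \(v_0^2+v_1^2\le 1\). For \(x=0\) this is \(t/(1-t)\le1\), i.e. \(t\le 1/2\), which holds. For \(x=1\) it is \(t^2+r^2=\tfrac32 r^2\le 1\), which also holds. By symmetry \(c_A=0\) works on Alice's side.

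\emph{Exclusion from \(\mathcal{Q}_{2\mathrm{Q}}^{\mathrm{sharp}}\).} This is the main step. Suppose, for contradiction, a realization by \(|\psi\rangle\) and unit vectors \(\hat m_x,\hat n_y\). By local unitaries (rotations of the Bloch sphere) bring \(|\psi\rangle\) to Schmidt form \(\cos\theta|00\rangle+\sin\theta|11\rangle\). Then both local Bloch vectors equal \(\beta\hat z\) with \(\beta=\cos2\theta\ge0\) (relabelling if needed), and the correlation matrix is \(T=\mathrm{diag}(s,-s,1)\) with \(s=\sqrt{1-\beta^2}\). We then have \(\alpha_x=\beta\,\hat m_x\cdot\hat z\), \(\beta_y=\beta\,\hat n_y\cdot\hat z\) and \(E_{xy}=\hat m_x^{T}T\hat n_y\).

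From \(\alpha_1=\beta_1=0\) we get \(\beta>0\) (since \(\beta_0=-t\neq0\)), and \(\hat m_1,\hat n_1\) are transverse to \(\hat z\). We also get \(\hat m_0\cdot\hat z=\hat n_0\cdot\hat z=-t/\beta\), so the transverse parts \(m_0^\perp,n_0^\perp\) of \(\hat m_0,\hat n_0\) have length \(\sqrt{1-t^2/\beta^2}\). In the transverse plane \(T\) acts as \(sR\) with \(R\) a reflection. The remaining conditions become:
\begin{itemize}
\item \(E_{01}=0\), i.e. \(R\hat n_1\perp m_0^\perp\);
\item \(E_{10}=0\), i.e. \(\hat m_1\perp Rn_0^\perp\);
\item \(E_{00}=0\), i.e. \(s\bigl(1-t^2/\beta^2\bigr)(u\cdot w)+t^2/\beta^2=0\), where \(u,w\) are the unit directions of \(Rn_0^\perp\) and \(m_0^\perp\).
\end{itemize}
In two dimensions the first two conditions force \(|\hat m_1\cdot R\hat n_1|=|u\cdot w|\). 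Hence \(r=|E_{11}|=s|u\cdot w|=t^2/(\beta^2-t^2)\). Substituting \(t^2=r^2/2\) gives \(\beta^2=(r+r^2)/2\).

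Finally, \(|u\cdot w|\le1\) gives \(r\le s=\sqrt{1-\beta^2}\), i.e. \(1-\tfrac{r}{2}-\tfrac{r^2}{2}\ge r^2\), i.e. \((3r-2)(r+1)\le0\). This forces \(r\le 2/3\) and contradicts \(r>2/3\).

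The expected obstacle is the case bookkeeping in this last part. One must treat degenerate transverse parts (for instance \(m_0^\perp=0\), which \(E_{00}=0\) already excludes since it would force \(t^2/\beta^2=0\)). One must also justify the Schmidt-form normalization, including the fact that the signs of \(\beta\) and of \(T\) can be fixed by local rotations without changing the given data.
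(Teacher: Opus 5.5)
Your proof is correct. It agrees with the paper on membership ($c_A=c_B=0$), but it reaches locality and the exclusion from $\mathcal{Q}_{2\mathrm{Q}}^{\mathrm{sharp}}$ by different routes.

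\textbf{Locality.} You check positivity and the eight CHSH inequalities and then invoke Fine's theorem. The paper instead writes down the explicit product model $q(\lambda)=p_{00}(a_0,b_0)\,p_{11}(a_1,b_1)$, which is self-contained.

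\textbf{How the paper handles exclusion.} The paper fixes the \emph{measurement} frame. It first uses the NDWU quadratic at $c=\pm1$ to force $c_A,c_B\in(-1,1)$. It then rotates to $\hat m_0=\hat n_0=(1,0,0)$, $\hat m_1=(c_A,s_A,0)$, $\hat n_1=(c_B,s_B,0)$ and reads off the in-plane parts of $\vec a,\vec b,T$. Finally it exhibits the product effect with $\vec u=\vec v=(1,\varepsilon,0)/\sqrt2$, whose probability is at most $\frac14\bigl(1-\frac{3r}{2}\bigr)<0$. That witness argument never uses purity, so it rules out every (possibly mixed) two-qubit state with projective measurements, which is a stronger conclusion.

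\textbf{How your route differs.} You fix the \emph{state} frame via the Schmidt decomposition. This genuinely needs purity, which is legitimate because $\mathcal{Q}_{2\mathrm{Q}}^{\mathrm{sharp}}$ is defined with pure states. You then solve the constraints exactly, which buys three things.
\begin{enumerate}
\item You do not need the NDWU step: collinear pairs are excluded automatically, since $\hat m_1,\hat n_1$ are transverse while $\hat m_0,\hat n_0$ are not.
\item You pin down $\beta^2=(r+r^2)/2$, so the whole obstruction reduces to $r\le s$.
\item The computation is reversible. Take $u\cdot w=-r/s$, and take $\hat m_1$ and $R\hat n_1$ as matching perpendiculars of $u$ and $w$. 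This gives a pure two-qubit realization whenever $r\le 2/3$, so your route shows the endpoint $2/3$ is sharp. The paper's witness argument does not show this.
\end{enumerate}

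\textbf{One unflagged detail.} You did not mention $s>0$, which you need to turn $E_{01}=E_{10}=0$ into orthogonality statements. It follows at once from $r=|E_{11}|=s\,|\hat m_1\cdot R\hat n_1|\le s$.
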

\begin{proof}
We first show that the behavior belongs to the local set $\mathcal{L}(2,2;2,2)$. Let the hidden variable be $\lambda=(a_0,b_0,a_1,b_1)\in\{\pm1\}^4$, with
$q(\lambda)
=
p_{00}(a_0,b_0)\,p_{11}(a_1,b_1)$, where
$p_{00}(a_0,b_0)=\frac14\Bigl(1-a_0\frac{r}{\sqrt{2}}-b_0\frac{r}{\sqrt{2}}\Bigr)$, and
$p_{11}(a_1,b_1)
=
\frac14(1-a_1 b_1 r)$.
Both probabilities are nonnegative, since \(r/\sqrt{2}<1/2\), we have that each of \(a_0 r/\sqrt{2}\) and \(b_0 r/\sqrt{2}\) has absolute value less than \(1/2\), and \(r<1\). On input \(x\), Alice returns \(a_x\). On input \(y\), Bob returns \(b_y\). The first pair $p_{00}(a_0,b_0)$ gives \(\mathbb E[A_0]=\mathbb E[B_0]=-r/\sqrt{2}\) and \(\mathbb E[A_0 B_0]=0\). The second pair $p_{11}(a_1,b_1)$ gives \(\mathbb E[A_1]=\mathbb E[B_1]=0\) and \(\mathbb E[A_1 B_1]=-r\). Independence of the two pairs gives \(\mathbb E[A_0 B_1]=\mathbb E[A_1 B_0]=0\). \(P^{(r)}\) admits a joint probability distribution and is thus local.

We now show that the behavior \(P^{(r)}\) belongs to \(\mathcal{A}_Y\). By Definition~\ref{def:sets}, this means that there is one number \(c_B\in[-1,1]\) such that, for every \((x,a)\) with \(P^{(r)}(a\mid x)>0\), the two conditional averages
\[
v_y^{a|x}
=
\mathbb E[B_y\mid A_x=a]
=
\frac{\mathbb E[B_y]+a\,\mathbb E[A_x B_y]}{1+a\,\mathbb E[A_x]}
\]
satisfy the quadratic~\eqref{eq:C-app-1}, and likewise on Alice's side with one number \(c_A\). For \(P^{(r)}\) these averages are
\begin{align*}
\mathbb E[B_0\mid A_0=+1]
&=
\frac{-r/\sqrt{2}}{1-r/\sqrt{2}},
&
\mathbb E[B_1\mid A_0=+1]
&=0,
\\
\mathbb E[B_0\mid A_0=-1]
&=
\frac{-r/\sqrt{2}}{1+r/\sqrt{2}},
&
\mathbb E[B_1\mid A_0=-1]
&=0,
\\
\mathbb E[B_0\mid A_1=+1]
&=
-r/\sqrt{2},
&
\mathbb E[B_1\mid A_1=+1]
&=-r,
\\
\mathbb E[B_0\mid A_1=-1]
&=
-r/\sqrt{2},
&
\mathbb E[B_1\mid A_1=-1]
&=r.
\end{align*}
We take \(c_B=0\), the quadratic ~\eqref{eq:C-app-1} then reduces to the sum of the squares of the two conditional averages $\bigl(v_0^{a|x}\bigr)^2+\bigl(v_1^{a|x}\bigr)^2$. For the first two lines that sum is the square of a number of absolute value at most \(1\), because \(r/\sqrt{2}<1/2\). For the last two lines it is \(r^2/2+r^2=3r^2/2<3/4<1\). The four conditional averages on Alice's side are the same four pairs, so \(c_A=0\) works as well. Thus \(P^{(r)}\in\mathcal{A}_Y\).

We now show that the behavior \(P^{(r)}\) is not in \(\mathcal{Q}_{2\mathrm{Q}}^{\mathrm{sharp}}\). Suppose, to the contrary, that there exists a pure two-qubit state and projective measurements \(A_x=\hat m_x\cdot\vec\sigma\), \(B_y=\hat n_y\cdot\vec\sigma\) realising \(P^{(r)}\). The number \(c_B=\hat n_0\cdot\hat n_1\) would then have to satisfy ~\eqref{eq:C-app-1} for every conditional pair above. At \(c_B=\pm 1\) the quadratic in ~\eqref{eq:C-app-1}  becomes \((\mathbb E[B_0\mid A_x=a]-c_B\mathbb E[B_1\mid A_x=a])^2\le 0\), so the two conditional averages would have to be equal or opposite. After the outcome \(A_0=+1\), these values are \((-r/\sqrt{2})/(1-r/\sqrt{2})\) and \(0\), which are neither equal nor opposite in the given ranges. The same holds for \(c_A\). Thus \(c_A\) and \(c_B\) would have to lie in \((-1,1)\).

Now, without loss of generality we take
\[
\hat m_0=\hat n_0=(1,0,0),
\qquad
\hat m_1=(c_A,s_A,0),
\qquad
\hat n_1=(c_B,s_B,0),
\]
where \(s_A=\sqrt{1-c_A^2}\) and \(s_B=\sqrt{1-c_B^2}\). Any two-qubit state has Bloch vectors \(\vec a,\vec b\in\mathbb R^3\) and a \(3\times 3\) correlation matrix \(T\), with \(\mathbb E[A_x]=\vec a\cdot\hat m_x\), \(\mathbb E[B_y]=\vec b\cdot\hat n_y\) and \(\mathbb E[A_x B_y]=\hat m_x^{\mathsf T}T\hat n_y\). These eight equations fix the components of \(\vec a\), \(\vec b\) and \(T\) in the plane of the two measurements as
\[
\vec a
=
\Bigl(-\frac{r}{\sqrt{2}},\;
\frac{r c_A}{s_A\sqrt{2}},\;
a_3\Bigr),
\qquad
\vec b
=
\Bigl(-\frac{r}{\sqrt{2}},\;
\frac{r c_B}{s_B\sqrt{2}},\;
b_3\Bigr),
\]
and the in-plane block of \(T\) as
\[
\begin{pmatrix}
T_{11}&T_{12}\\
T_{21}&T_{22}
\end{pmatrix}
=
\begin{pmatrix}
0&0\\
0&-r/(s_A s_B)
\end{pmatrix}.
\]
The third components \(a_3,b_3\) and the remaining entries of \(T\) are not fixed by \(P^{(r)}\).

To achieve a contradiction, consider any two unit vectors \(\vec u,\vec v\in\mathbb R^3\), with effects given by the operators \((I+\vec u\cdot\vec\sigma)/2\) and \((I+\vec v\cdot\vec\sigma)/2\). Their joint outcome probability in the state must be nonnegative
\[
\operatorname{Tr}\left[
\rho\,
\frac{I+\vec u\cdot\vec\sigma}{2}
\otimes
\frac{I+\vec v\cdot\vec\sigma}{2}
\right]
=
\frac14\Bigl[1+\vec a\cdot\vec u+\vec b\cdot\vec v+\vec u^{\mathsf T}T\vec v\Bigr]
\ge 0.
\]
Taking both vectors in the measurement plane, \(\vec u=\vec v=(1,\varepsilon,0)/\sqrt{2}\), with \(\varepsilon=\pm 1\), this supposed probability term is 
$
\frac14\Bigl[1-r-\frac{r}{2 s_A s_B}
+\varepsilon\frac{r}{2}\left(\frac{c_A}{s_A}+\frac{c_B}{s_B}\right)\Bigr]$. Now for any given fixed $c_A, c_B$, we may choose $\varepsilon = \pm 1$ so that the term $\varepsilon\frac{r}{2}\left(\frac{c_A}{s_A}+\frac{c_B}{s_B}\right)$ is $\leq 0$. We then have that since \(s_A s_B\le 1\), the value of the supposed probability is at most $\frac14\Bigl[1-r-\frac{r}{2}\Bigr]=\frac14\Bigl[1-\frac{3r}{2} \Bigr]
<0$. The joint probability term would be negative, and we have reached a contradiction. This contradicts the supposed realization. Therefore \(P^{(r)}\notin\mathcal{Q}_{2\mathrm{Q}}^{\mathrm{sharp}}\). 
\end{proof}

Thus \(\mathcal{Q}_{2\mathrm{Q}}^{\mathrm{sharp}}\subsetneq\mathcal{A}_Y\). The inclusion is Lemma~\ref{lem:sharp-in}, and the strictness is Lemma~\ref{lem:Pr}. The reverse strict inclusion \(\mathcal{A}_Y\subsetneq\mathcal{Q}_{2222}\) is Lemma~\ref{lem:AY-strict}. Combining the three lemmas, we obtain
\[
\mathcal{Q}_{2\mathrm{Q}}^{\mathrm{sharp}}
\subsetneq
\mathcal{A}_Y
\subsetneq
\mathcal{Q}_{2222}
=
\operatorname{conv}(\mathcal{A}_Y).
\]

\section{Beyond the \texorpdfstring{$(2,2;2,2)$}{(2,2;2,2)} scenario}
\label{app:beyond}

We prove two statements in this Appendix.
First, we show that the NDWU criterion generates the quantum set even in the Bell scenario where one party has two binary measurements and the other party has arbitrary finite numbers of settings and outcomes.
Second, binary coarse-grainings give a necessary condition when both parties have measurements with more outcomes. In going beyond the binary input-output scenario, one needs to be careful \cite{Slofstra2019, Pauwels2026} so that throughout this Appendix, a quantum behavior means one admitting a finite-dimensional tensor-product realization.

\subsection{One party with two binary measurements}
\label{app:D-one-sided}

Consider the Bell scenario in which Alice has settings $x\in\{0,1\}$ and outcomes $a\in\{+1,-1\}$, while 
Bob has settings $y\in\{1,\ldots,n\}$ and outcomes $b\in\{1,\ldots,m_y\}$, where $n$ and every $m_y$ are positive integers.
Let us denote $\mathbf m=(m_1,\ldots,m_n)$.
A no-signalling behavior in this scenario is a family $P(a,b\mid x,y)\ge 0$ satisfying $\sum_{a,b}P(a,b\mid x,y)=1$, such that $\sum_a P(a,b\mid x,y)=P(b\mid y)$ is independent of $x$, and $\sum_b P(a,b\mid x,y)=P(a\mid x)$ is independent of $y$. Let us denote the finite-dimensional quantum set by $\mathcal Q(2,2;n,\mathbf m)$.
Thus $P\in\mathcal Q(2,2;n,\mathbf m)$ if there are finite-dimensional Hilbert spaces $\mathcal H_A,\mathcal H_B$, a positive operator $\rho_{AB}$ of trace one on $\mathcal H_A\otimes\mathcal H_B$, and measurements $\{M_{a|x}\}_a$, $\{N_{b|y}\}_b$ such that
\[
P(a,b\mid x,y)
=
\operatorname{Tr}\!\left[\rho_{AB}(M_{a|x}\otimes N_{b|y})\right].
\]

For $P(b\mid y)>0$, define
\begin{equation}
u_x^{b|y}
=
\mathbb E[A_x\mid B_y=b]
=
\frac{\sum_{a=\pm1}aP(a,b\mid x,y)}{P(b\mid y)}.
\label{eq:D-conditional}
\end{equation}
Let $\mathcal A_A^{(n,\mathbf m)}$ consist of the no-signalling behaviors for which there is one $c\in[-1,1]$ satisfying
\begin{equation}
\bigl(u_0^{b|y}\bigr)^2+\bigl(u_1^{b|y}\bigr)^2+c^2-2c\,u_0^{b|y}u_1^{b|y}\le 1
\label{eq:D-common-overlap}
\end{equation}
for every $(b,y)$ with $P(b\mid y)>0$.
As before, the same $c$ must work for all these conditional pairs, while an outcome with $P(b\mid y)=0$ imposes no conditional constraint.

\begin{theorem}
\label{thm:22n2sm}
For arbitrary finite $n$ and $\mathbf m$, it holds that
\begin{equation}
\mathcal Q(2,2;n,\mathbf m)
=
\operatorname{conv}\!\left(\mathcal A_A^{(n,\mathbf m)}\right).
\label{eq:D-generation}
\end{equation}
\end{theorem}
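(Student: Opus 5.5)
I would prove the two inclusions separately, mirroring the proof of Theorem~\ref{thm:suff-app}, with Alice (the party holding two binary measurements) playing the role that Bob played there.

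\emph{Inclusion $\operatorname{conv}(\mathcal A_A^{(n,\mathbf m)})\subseteq\mathcal Q(2,2;n,\mathbf m)$.} Since $\mathcal Q(2,2;n,\mathbf m)$ is convex (by the block-diagonal direct-sum argument already used), it suffices to show $\mathcal A_A^{(n,\mathbf m)}\subseteq\mathcal Q(2,2;n,\mathbf m)$. Fix $P$ in the set with common overlap $c$.

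If $|c|<1$, take Alice's observables $A_0=\sigma_z$ and $A_1=(s\,\sigma_x+c\,\sigma_z)$, with $s=\sqrt{1-c^2}$. For each $(b,y)$ with $P(b\mid y)>0$, define the Bloch vector
\[
\vec r_{b|y}=\bigl((u_1^{b|y}-c\,u_0^{b|y})/s,\,0,\,u_0^{b|y}\bigr),
\]
which has norm at most one precisely by \eqref{eq:D-common-overlap}. Then set $\tau_{b|y}=P(b\mid y)(I+\vec r_{b|y}\cdot\vec\sigma)/2\succeq 0$, and $\tau_{b|y}=0$ otherwise. If $c=\pm1$, use $A_1=c\,\sigma_z$ and $\tau_{b|y}=\tfrac12 P(b\mid y)(I+u_0^{b|y}\sigma_z)$, exactly as in case (ii).

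The key check is that $\sum_b\tau_{b|y}$ is independent of $y$. Because $P(b\mid y)\,u_x^{b|y}=\sum_a aP(a,b\mid x,y)$, summing over $b$ gives $\mathbb E[A_x]$, which by no-signalling does not depend on $y$. Hence
\[
\rho_A=\tfrac12\Bigl[I+\tfrac{\alpha_1-c\alpha_0}{s}\,\sigma_x+\alpha_0\,\sigma_z\Bigr]
\]
is common to all $y$. The normalisation $\sum_b P(b\mid y)=1$ is what makes the identity coefficient equal to $1/2$ for every $y$, even though the number of outcomes $m_y$ varies.

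Next, purify $\rho_A$ as in \eqref{eq:Psi-def} and define Bob's POVM elements $N_{b|y}$ as the transpose of $\rho_A^{-1/2}\tau_{b|y}\rho_A^{-1/2}$ on the support, embedding in $\mathbb C^2$ when the rank is one. Positivity and $\sum_b N_{b|y}=I$ then follow verbatim. The same trace computation gives
\[
\langle\Psi|M_{a|x}\otimes N_{b|y}|\Psi\rangle=\operatorname{Tr}(\tau_{b|y}M_{a|x})=P(b\mid y)\,\frac{1+a\,u_x^{b|y}}{2}=P(a,b\mid x,y),
\]
where $M_{a|x}=(I+aA_x)/2$. The last equality holds because for binary $a$ one has
\[
P(a,b\mid x,y)=\tfrac12\Bigl(P(b\mid y)+a\sum_{a'}a'P(a',b\mid x,y)\Bigr).
\]
Nothing in this step uses that Bob's outcomes are binary, so arbitrary $m_y$ pose no problem.

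\emph{Inclusion $\mathcal Q(2,2;n,\mathbf m)\subseteq\operatorname{conv}(\mathcal A_A^{(n,\mathbf m)})$.} Start from a finite-dimensional realisation and Naimark-dilate Alice's two POVMs to projectors. Apply Jordan's lemma to Alice's two projections only, decomposing $\mathcal H_A=\bigoplus_\mu\mathcal H_{A,\mu}$ with $\dim\mathcal H_{A,\mu}\le2$. Setting $q_\mu=\operatorname{Tr}[\rho_{AB}(\Pi_\mu\otimes I)]$ and $\rho_\mu\propto(\Pi_\mu\otimes I)\rho_{AB}(\Pi_\mu\otimes I)$, we get $P=\sum_\mu q_\mu P_\mu$, because Alice's operators are block diagonal; Bob's measurements $N_{b|y}$ are untouched. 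This is the qubit reduction of \cite{PanahiWolfe2025}.

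On each block, Alice's observables are $\hat m_x\cdot\vec\sigma$, with one-dimensional blocks embedded as $\pm\sigma_z$. Bob's conditional effects steer Alice's qubit to the state $\operatorname{Tr}_B[\rho_\mu(I\otimes N_{b|y})]/P_\mu(b\mid y)$, which is a qubit state with Bloch vector of norm at most one. Its averages are $u_x^{b|y}=\vec r\cdot\hat m_x$. By the length computation of case (i), equivalently \eqref{eq:proj-bound}, these satisfy \eqref{eq:D-common-overlap} with $c=\hat m_0\cdot\hat m_1$, uniformly in $(b,y)$. So each $P_\mu\in\mathcal A_A^{(n,\mathbf m)}$; no pure-state decomposition is even needed.

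The step I expect to require the most care is this second inclusion. One has to confirm that reducing only Alice's side, with Bob's space and possibly non-projective multi-outcome measurements left arbitrary, still yields valid no-signalling block behaviours $P_\mu$. One also has to handle blocks with $q_\mu=0$ (discard them) and conditional outcomes with $P_\mu(b\mid y)=0$ (which impose no constraint). Both points are straightforward once the block-diagonal structure of $M_{a|x}\otimes I$ is used.
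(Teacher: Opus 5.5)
Your proof is correct. The first inclusion is essentially the paper's construction: you parametrize $\tau_{b|y}$ by the conditional Bloch vectors $\vec r_{b|y}$ and set $\tau_{b|y}=0$ on null outcomes, while the paper uses the homogeneous variables $w_{b|y},d_{x,b|y}$, which comes to the same thing.

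The second inclusion takes a shorter route than the paper's. After Jordan's lemma on Alice's side, the paper also decomposes each block state into pure states. It then compresses Bob's measurements through the Schmidt isometry $W$ to reach a pure two-qubit realization, and only then invokes the qubit bound. You instead note that the block state $\rho_\mu$ on $\mathcal H_{A,\mu}\otimes\mathcal H_B$, with $\dim\mathcal H_{A,\mu}\le 2$, may be mixed and Bob may be arbitrary. Alice's steered state $\operatorname{Tr}_B[\rho_\mu(I\otimes N_{b|y})]/P_\mu(b\mid y)$ is still a qubit state. So \eqref{eq:proj-bound} gives the common-overlap condition with $c=\hat m_0\cdot\hat m_1$ for all $(b,y)$ at once.

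This is valid. Membership in $\mathcal A_A^{(n,\mathbf m)}$ is a condition on the behavior alone, and the paper's concluding step itself only uses that Alice's conditional state is a qubit state. The purification and the Bob-side compression are therefore not needed for the stated equality.

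Your version makes clear that only the side with two binary measurements needs reducing; Bob's dimension and POVM structure play no role. The paper's extra step buys a finer structural decomposition. Every behavior in $\mathcal Q(2,2;n,\mathbf m)$ becomes a mixture of behaviors realized by a pure state on $\mathbb C^2\otimes\mathbb C^2$, with Alice projective and Bob's effects compressed to qubit POVMs.

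One small precision: saying Alice's observables on every two-dimensional block have the form $\hat m_x\cdot\vec\sigma$ requires the Jordan blocks to be irreducible. A two-dimensional block on which some restricted projector equals $0$ or $I$ must be split into one-dimensional blocks, as the paper notes in the proof of Theorem~\ref{thm:suff-app}.
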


\begin{proof}
We adapt the two directions of the proof in Appendix~\ref{app:hierarchy}.
First, consider any arbitrary $P\in\mathcal A_A^{(n,\mathbf m)}$ and fix an admissible $c$. Let us set $w_{b|y}=P(b\mid y)$ and $d_{x,b|y}=\sum_{a=\pm1}aP(a,b\mid x,y)$.
Positivity, normalization, and no-signalling directly give that
\begin{align}
P(a,b\mid x,y)
&=\frac{w_{b|y}+a\,d_{x,b|y}}{2},
&
|d_{x,b|y}|&\le w_{b|y},
\label{eq:D-statistics}\\
\sum_b w_{b|y}&=1,
&
\sum_b d_{x,b|y}&=\alpha_x,
\label{eq:D-sums}
\end{align}
where $\alpha_x=\sum_a aP(a\mid x)$ is independent of $y$.
The common-overlap condition \eqref{eq:D-common-overlap} is then equivalent to
\begin{equation}
d_{0,b|y}^2+d_{1,b|y}^2-2c\,d_{0,b|y}d_{1,b|y}
\le (1-c^2)w_{b|y}^2.
\label{eq:D-homogeneous}
\end{equation}
This also holds when $w_{b|y}=0$, because~\eqref{eq:D-statistics} then forces both $d$'s to vanish.

For $|c|<1$, let us set $s=\sqrt{1-c^2}$ and define
\begin{align}
A_0&=\sigma_z,
&
A_1&=s\sigma_x+c\sigma_z,
\nonumber\\
\tau_{b|y}
&=
\frac12\left[
w_{b|y}I+
\frac{d_{1,b|y}-c\,d_{0,b|y}}{s}\,\sigma_x
+d_{0,b|y}\sigma_z
\right].
\label{eq:D-lift}
\end{align}
By~\eqref{eq:D-homogeneous}, we have that $\tau_{b|y}\succeq 0$. Its trace and observable averages are as before $\operatorname{Tr}(\tau_{b|y})=w_{b|y}$ and $\operatorname{Tr}(\tau_{b|y}A_x)=d_{x,b|y}$.
Moreover,~\eqref{eq:D-sums} gives the common sum
\begin{equation}
\sum_b\tau_{b|y}
=
\rho_A
:=
\frac12\left[
I+\frac{\alpha_1-c\alpha_0}{s}\,\sigma_x
+\alpha_0\sigma_z
\right],
\label{eq:D-common-state}
\end{equation}
which is independent of $y$.
In the case when $c=\pm 1$,~\eqref{eq:D-homogeneous} gives $d_{1,b|y}=c\,d_{0,b|y}$.
Let us choose in this case $A_0=\sigma_z$, $A_1=c\sigma_z$, and $\tau_{b|y}=\frac12(w_{b|y}I+d_{0,b|y}\sigma_z)$.
These operators are positive, have the same traces, and sum to $\rho_A=(I+\alpha_0\sigma_z)/2$.

Thus in either case $\{\tau_{b|y}\}_b$ is, for each $y$, a positive decomposition of the same qubit state.
As before, we apply the purification construction of Appendix~\ref{app:hierarchy} with the parties exchanged. That is, 
we take
\[
|\Psi\rangle
=
\sum_{k=1}^{r}\sqrt{\lambda_k}\,|k\rangle_A|k\rangle_B,
\qquad
N_{b|y}
=
\left(\rho_A^{-1/2}\tau_{b|y}\rho_A^{-1/2}\right)^{\mathsf T},
\]
where $\rho_A=\sum_{k=1}^{r}\lambda_k|k\rangle\langle k|$, $\lambda_k>0$, and $r\le 2$.
The inverse and transpose are taken on the support of $\rho_A$ in the displayed eigenbasis.
The operators $N_{b|y}$ are positive and sum to $I_r$.
The trace identity of Appendix~\ref{app:hierarchy} gives, with $M_{a|x}=(I+aA_x)/2$,
\[
\langle\Psi|M_{a|x}\otimes N_{b|y}|\Psi\rangle
=
\operatorname{Tr}(\tau_{b|y}M_{a|x})
=
\frac{w_{b|y}+a\,d_{x,b|y}}{2}
=
P(a,b\mid x,y).
\]
In the case when the rank $r=1$, we embed Bob's support into a qubit and add the projector onto its orthogonal complement to one outcome of each measurement.
This completes each measurement without changing the probabilities in the behavior.
Therefore $\mathcal A_A^{(n,\mathbf m)}\subseteq\mathcal Q(2,2;n,\mathbf m)$.
The direct-sum realization of finite mixtures, as in Appendix~\ref{app:hierarchy}, gives the same inclusion for the convex hull.

To get the inclusion in the opposite direction, consider any given finite-dimensional quantum realization of $P$.
As in Appendix~\ref{app:hierarchy}, we dilate Alice's two binary measurements to projective measurements on one enlarged space and apply Jordan's Lemma on Alice's side alone. The behavior then becomes a finite convex combination of behaviors whose spaces on Alice's side have dimension at most two. Decomposing each block state as before reduces the argument to pure states $|\psi\rangle\in\mathcal H_{A,j}\otimes\mathcal H_B$, with $\dim\mathcal H_{A,j}\le 2$.

The one additional step in this proof is to compress Bob's measurements. We write the Schmidt decomposition of one component as
\[
|\psi\rangle
=
\sum_{k=1}^{r}\sqrt{\mu_k}\,|a_k\rangle|\beta_k\rangle,
\qquad r\le 2,
\]
and let $W:\mathbb C^r\to\mathcal H_B$ be the isometry $W|k\rangle=|\beta_k\rangle$.
For every setting and outcome, we define $\widetilde N_{b|y} :=W^\dagger N_{b|y}W$. We check that these effects are positive and satisfy $\sum_b\widetilde N_{b|y}=I_r$.
With $|\widetilde\psi\rangle=\sum_k\sqrt{\mu_k}|a_k\rangle|k\rangle$, the identity $|\psi\rangle=(I\otimes W)|\widetilde\psi\rangle$ gives
\[
\langle\psi|F\otimes N_{b|y}|\psi\rangle
=
\langle\widetilde\psi|F\otimes\widetilde N_{b|y}|\widetilde\psi\rangle
\]
for every Alice operator $F$.
On a two-dimensional block, Alice's observables have the form $A_x=\hat m_x\cdot\vec\sigma$ with $\|\hat m_x\|=1$.
A one-dimensional block with deterministic outcomes $\eta_x$ is embedded into the qubit state $|0\rangle$ with observables $A_x=\eta_x\sigma_z$. Similarly Bob's one-dimensional supports are extended to qubits, so that we obtain the pure two-qubit realization.

Finally, we see that for every positive-probability outcome on Bob's side, Alice's conditional state is a qubit state. The qubit calculation in Appendix~\ref{app:ndwu} therefore gives~\eqref{eq:D-common-overlap} with the same overlap parameter $c=\hat m_0\cdot\hat m_1$ for all $(b,y)$.
Every component therefore belongs to $\mathcal A_A^{(n,\mathbf m)}$, proving the reverse inclusion and hence~\eqref{eq:D-generation}.
\end{proof}

\subsection{Binary coarse-grainings of measurements with more outcomes}
\label{app:D-binning}

Consider the Bell scenario $(2,m;2,m)$ in which both parties have two measurements with $m\ge 2$ outcomes each. The inputs for Alice and Bob are $x,y\in\{0,1\}$, and the outputs are $a,b\in\{1,\ldots,m\}$ respectively.
For each setting, choose nonempty proper subsets $S_x\subsetneq\{1,\ldots,m\}$ and $T_y\subsetneq\{1,\ldots,m\}$.
Define
\[
f_x(a)=
\begin{cases}
+1,&a\in S_x,\\
-1,&a\notin S_x,
\end{cases}
\qquad
g_y(b)=
\begin{cases}
+1,&b\in T_y,\\
-1,&b\notin T_y.
\end{cases}
\]
We construct a coarse-grained behavior as
\begin{equation}
P_{S,T}(\alpha,\beta\mid x,y)
=
\sum_{\substack{a:f_x(a)=\alpha\\ b:g_y(b)=\beta}}
P(a,b\mid x,y),
\qquad
\alpha,\beta\in\{\pm1\}.
\label{eq:D-binned-behavior}
\end{equation}
Let $\mathcal R_m^{\mathrm{bin}}$ be the set of no-signalling behaviors in the $(2,m;2,m)$ scenario such that $P_{S,T}\in\operatorname{conv}(\mathcal A_Y)$ for every choice of the four subsets.
Here $\mathcal A_Y$ is the binary common-overlap set from the $(2,2;2,2)$ Bell scenario. We then have that

\begin{theorem}
\label{thm:2m2m}
For every finite $m\ge 2$,
\begin{equation}
\mathcal Q(2,m;2,m)
\subseteq
\mathcal R_m^{\mathrm{bin}}.
\label{eq:D-binning-inclusion}
\end{equation}
\end{theorem}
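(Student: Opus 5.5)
The plan is to show that binary coarse-graining maps a finite-dimensional quantum behavior in the $(2,m;2,m)$ scenario to a finite-dimensional quantum behavior in the $(2,2;2,2)$ scenario. Theorem~\ref{thm:suff-app}, which gives $\mathcal Q_{2222}=\operatorname{conv}(\mathcal A_Y)$, then finishes the argument.

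First, fix $P\in\mathcal Q(2,m;2,m)$ with a realization $P(a,b\mid x,y)=\operatorname{Tr}[\rho_{AB}(M_{a|x}\otimes N_{b|y})]$ on finite-dimensional $\mathcal H_A\otimes\mathcal H_B$. Quantum behaviors are no-signalling, so $P\in\mathcal{NS}(2,m;2,m)$, which is the first requirement in the definition of $\mathcal R_m^{\mathrm{bin}}$.

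Next, fix arbitrary nonempty proper subsets $S_0,S_1,T_0,T_1$. Define the coarse-grained effects
\[
\widetilde M_{\alpha|x}=\sum_{a:\,f_x(a)=\alpha}M_{a|x},\qquad
\widetilde N_{\beta|y}=\sum_{b:\,g_y(b)=\beta}N_{b|y},\qquad \alpha,\beta\in\{\pm1\}.
\]
Each $\widetilde M_{\alpha|x}$ is a sum of positive operators, so it is positive. Because $f_x$ partitions $\{1,\ldots,m\}$ into the two classes $S_x$ and its complement, we have $\widetilde M_{+1|x}+\widetilde M_{-1|x}=\sum_a M_{a|x}=I$, and likewise on Bob's side. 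So these are binary POVMs on the same finite-dimensional spaces. Linearity of the trace gives
\[
\operatorname{Tr}[\rho_{AB}(\widetilde M_{\alpha|x}\otimes\widetilde N_{\beta|y})]
=\sum_{a:f_x(a)=\alpha}\ \sum_{b:g_y(b)=\beta}P(a,b\mid x,y)=P_{S,T}(\alpha,\beta\mid x,y),
\]
which is exactly \eqref{eq:D-binned-behavior}. Hence $P_{S,T}\in\mathcal Q_{2222}$. This uses the finite-dimensional tensor-product quantum set, which is the convention adopted throughout.

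Finally, Theorem~\ref{thm:suff-app} gives $\mathcal Q_{2222}=\operatorname{conv}(\mathcal A_Y)$, so $P_{S,T}\in\operatorname{conv}(\mathcal A_Y)$. The subsets were arbitrary, so $P\in\mathcal R_m^{\mathrm{bin}}$. The same argument applied to a chosen pair of settings on each side covers the extension to more settings mentioned in the main text.

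There is no genuine obstacle here. The only points needing care are that the coarse-grained operators really form POVMs, which relies on the subsets inducing a partition, and that the realization stays finite-dimensional, so that Theorem~\ref{thm:suff-app} applies as stated.
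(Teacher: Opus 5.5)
Your proposal is correct and matches the paper's proof essentially step for step. You coarse-grain the effects into binary POVMs, use linearity of the trace to get $P_{S,T}\in\mathcal Q_{2222}$, and then invoke Theorem~\ref{thm:suff-app}; in addition, you make explicit that $P$ is no-signalling, which the paper leaves implicit in the definition of $\mathcal R_m^{\mathrm{bin}}$.
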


\begin{proof}
Take $P\in\mathcal Q(2,m;2,m)$, realized by $\rho_{AB}$ and measurements $\{M_{a|x}\}$, $\{N_{b|y}\}$.
Fix $S_0,S_1,T_0,T_1$, and define
\begin{align}
\widetilde M_{+1|x}
&=\sum_{a\in S_x}M_{a|x},
&
\widetilde M_{-1|x}
&=\sum_{a\notin S_x}M_{a|x},
\nonumber\\
\widetilde N_{+1|y}
&=\sum_{b\in T_y}N_{b|y},
&
\widetilde N_{-1|y}
&=\sum_{b\notin T_y}N_{b|y}.
\label{eq:D-binned-effects}
\end{align}
Every displayed operator is positive, and the two operators at each setting sum to the identity.
By linearity,
\begin{equation}
\operatorname{Tr}\!\left[\rho_{AB}(\widetilde M_{\alpha|x}\otimes\widetilde N_{\beta|y})\right]
=
P_{S,T}(\alpha,\beta\mid x,y).
\label{eq:D-binned-realization}
\end{equation}
Thus $P_{S,T}\in\mathcal Q_{2222}$.
Appendix~\ref{app:hierarchy} gives $\mathcal Q_{2222}=\operatorname{conv}(\mathcal A_Y)$, so $P_{S,T}\in\operatorname{conv}(\mathcal A_Y)$.
Since the subsets were arbitrary, $P\in\mathcal R_m^{\mathrm{bin}}$.


\end{proof}

More generally, for any finite Bell scenario, the same argument shows that every quantum behavior passes the convexified $(2,2;2,2)$ NDWU test after selecting two settings on each party and applying a binary binning at each setting.

\end{document}